\newtheorem{thm}{Theorem}[section]
\newtheorem{prop}[thm]{Proposition}
\newtheorem{lemma}[thm]{Lemma}
\newtheorem{cor}[thm]{Corollary}
\newtheorem{definition}[thm]{Definition}
\newtheorem{remark}[thm]{Remark}
\title{\boldmath M-theory moduli spaces and torsion-free structures}
\author{Mariana Gra\~na}
\author{and C. S. Shahbazi}
\affiliation{Institut de Physique Th\'eorique, CEA Saclay.}
\emailAdd{mariana.grana@cea.fr}
\emailAdd{carlos.shabazi-alonso@cea.fr}
\abstract{\vspace{0.1cm}\\ Motivated by the description of $\mathcal{N}=1$ M-theory compactifications to four-dimensions given by Exceptional Generalized Geometry, we propose a way to \emph{geometrize} the M-theory fluxes by appropriately relating the compactification space to a higher-dimensional manifold equipped with a torsion-free structure. As a non-trivial example of this proposal, we construct a bijection from the set of $Spin(7)$-structures on an eight-dimensional $S^{1}$-bundle to the set of $G_{2}$-structures on the base space, fully characterizing the $G_{2}$-torsion clases when the total space is equipped with a torsion-free $Spin(7)$-structure. Finally, we elaborate on how the higher-dimensional manifold and its moduli space of torsion-free structures can be used to obtain information about the moduli space of M-theory compactifications.}
\begin{document}
\maketitle
\flushbottom


\section{Introduction}
\label{sec:introduction}


In the context of String/M-theory compactifications, supersymmetry implies the existence of a topological $G$-structure satisfying a particular set of differential conditions on the appropriate principal bundle over the internal manifold. In particular, this set of differential conditions can be written as first-order partial differential equations on the tensors that define the $G$-structure. The moduli space of a given supersymmetric compactification is an important space from the physics point of view; it is expected to be a finite-dimensional manifold equipped with a Riemannian metric closely related to the non-linear sigma-model appearing in the effective supergravity action of the compactification. In the simplest cases, namely in the absence of fluxes, the topological reduction is on the frame bundle of the internal space and the differential conditions implied by supersymmetry are equivalent to imposing the $G$-structure to be \emph{torsion-free} \cite{Candelas:1985en,Candelas:1984yd}, meaning that the reduction is not only topological but also geometric. Usually the moduli space of this torsion-free structures is well under control when the internal space is an oriented, compact manifold \cite{Candelas:1990pi,Joyce2007}. In particular, the Riemannian metric is known and thus it can be used to explicitly write the effective action of the compactification \cite{Ferrara:1988ff,Cecotti:1988qn}. 

From the phenomenological point of view, realistic scenarios require the moduli spaces to be just isolated points, and for that  
the presence of non-vanishing fluxes is crucial\footnote{It is also possible to obtain phenomenologically viable models through flux-less compactifications on manifolds with the appropriate kind of singularities, see references \cite{Acharya:2002kv,Acharya:2004qe}.}. This implies that the topological $G$-structure induced by supersymmetry will have non-vanishing torsion \cite{Strominger:1986uh,Gauntlett:2003cy}, always subject to the corresponding differential equations. It is thus desirable to have a better understanding of the moduli space in the presence of fluxes, in order to find as explicitly as possible the four-dimensionl models coming from String/M-theory compactifications, that can ultimately be used to make contact with particle physics or cosmology.  

A huge effort has been devoted so far in understanding the supersymmetric compactifications, see \cite{Grana:2005jc} and references therein. One approach, namely generalized geometry either in its complex \cite{2002math......9099H,2004math......1221G} or exceptional flavour \cite{Hull:2007zu,Pacheco:2008ps}, has proven to be particularly interesting and fruitful, giving a unified geometrical description of the different flux backgrounds as well as their moduli spaces \cite{Grana:2005ny,Grimm:2005fa,Grana:2006hr,Pacheco:2008ps,Grana:2009im,Grana:2012zn}.  

In this note we are going to argue that through the exceptional generalized geometric description of the moduli space of $\mathcal{N}=1$ supersymmetric M-theory compactifications to four-dimensions, we can associate to every internal seven-dimensional manifold $\mathcal{M}_{7}$ an eight-dimensional manifold $\mathcal{M}_{8}$ equipped with a particular set of tensors $\mathfrak{S}$ (given in Definition \ref{def:intermediatemanifolds}) that contain all the information about the topological reduction implied by supersymmetry. This eight-dimensional manifold $\left(\mathcal{M}_{8},\mathfrak{S}\right)$ was dubbed \emph{intermediate manifold} in \cite{Grana:2014rpa}, where the study of its topological properties was initiated. The geometry of an intermediate manifold $\left(\mathcal{M}_{8},\mathfrak{S}\right)$ is completely specified once the tensors $\mathfrak{S}$  are required to satisfy the differential conditions obtained from supersymmetry. In order to translate the supersymmetry conditions to the intermediate structure $\mathfrak{S}$, it is necessary first to write them in the exceptional generalized geometry language, which has not been done yet. Once they are known we will be able to fully uncover the role of intermediate manifolds in relation to M-theory compactifications and their moduli spaces. However, it is clear that the moduli space of intermediate structures on $\mathcal{M}_{8}$ is closely related to the moduli space of the corresponding compactification, and therefore we can resort to this guiding principle in order to construct particular examples of $\mathcal{M}_{8}$.

More precisely, here we will propose that it is natural to construct $\mathcal{M}_{8}$ from $\mathcal{M}_{7}$ in such a way that $\mathcal{M}_{8}$ is equipped with a torsion-free $Spin(7)$-structure induced by a one-to-one map from the $G_{2}$-structure in seven dimensions, obtaining therefore a candidate for a map between $G_{2}$-structures with non-zero torsion in seven dimensions and torsion-free structures in eight-dimensions. This is interesting because the moduli space of the latter is much more under  control than the moduli space of the former, and thus it is reasonable to expect that using the map some new information can be obtained about the moduli space of $G_{2}$-structures with non-vanishing torsion.

Once we realize that one can embed structures with torsion in a given dimension into torsion-free structures in a higher dimension, there is no need to stop at eight-dimensions. In fact, for a general enough $G_{2}$-structure in seven-dimensions we should not expect to obtain a torsion-free $Spin(7)$-structure in eight dimensions, but that could be possible if we consider instead a manifold of dimension large enough. This construction can be as well motivated from the Exceptional Generalized Geometry formulation of the compactification.

The outline of this note is as follows. In section \ref{sec:intermediateEGG} we review $\mathcal{N}=1$ M-theory compactifications to four dimensions,  motivate the definition of intermediate manifolds and show  how  they can be constructed from the physical manifold $\mathcal{M}_{7}$. 
In section \ref{sec:differentialconditions} we present  a few examples of this construction. In particular, and also as a non-trivial example of intermediate manifold, we obtain a bijection from the set of $Spin(7)$-structures on an eight-dimensional $S^{1}$-bundle and the set of $G_{2}$-structures on the base space, fully characterizing the $G_{2}$-torsion clases when the total space is equipped with an invariant torsion-free $Spin(7)$-structure.
In section \ref{sec:modulispaces} we define the moduli space of supersymmetric M-theory compactifications and explain how the map between $G_{2}$-structures in seven-dimensions and torsion-free structures in eight-dimensions can be used to obtain information about the former. 


\section{Intermediate manifolds from Exceptional Generalized Geometry}
\label{sec:intermediateEGG}


Let us consider the bosonic sector $\left(\mathsf{g},\mathsf{G}\right)$ of eleven-dimensional Supergravity on a space-time manifold $\mathcal{M}$ that can be written as a topologically trivial direct product

\begin{equation}
\label{eq:productmanifold}
\mathcal{M} = \mathcal{M}_{1,3}\times\mathcal{M}_{7}\, ,
\end{equation}

\noindent
where $\mathcal{M}_{1,3}$ is a four-dimensional Lorentzian oriented, spin manifold and $\mathcal{M}_{7}$ is a seven-dimensional, Riemannian, oriented, spin manifold. Although we will speak about \emph{compactifications}, we are not going to assume that $\mathcal{M}_{7}$ is compact since it might be consistent to compactify in non-compact space-times with finite volume and appropriate behavior of the laplacian operator \cite{Nicolai:1984jga}. 
According to the product structure (\ref{eq:productmanifold}) of the space-time manifold $\mathcal{M}$, the 
%
%
we choose the Lorentzian metric $\mathsf{g}$ on $\mathcal{M}$ to be given by

\begin{equation}
\label{eq:11dmetricproduct}
\mathsf{g} = e^{2A} \mathsf{g}_{1,3}\times \mathsf{g}_{7} \, , 
\end{equation}

\noindent
where $\mathsf{g}_{7}$ is a Riemannian metric on $\mathcal{M}_{7}$, $e^{2A} \in C^{\infty}\left(\mathcal{M}_{7}\right)$ is the warp factor and $\mathsf{g}_{1,3}$ is a Lorentzian metric on $\mathcal{M}_{1,3}$. In addition, and in accordance again with the product structure of $\mathcal{M}$ and the choice of metric \eqref{eq:11dmetricproduct}, we will assume that the four-form $\mathsf{G}$ is given by

\begin{equation}
\label{eq:Gproduct}
\mathsf{G} = \mu\, \mathrm{pr}^{\ast}_{1}\mathrm{Vol_{4}} + \mathrm{pr}^{\ast}_{2} G\, , \qquad \mu\in \mathbb{R}\, , \qquad G\in\Omega^{4}_{cl}\left(\mathcal{M}_{7}\right)\, ,
\end{equation}

\noindent
where $\mathrm{pr}_{1}\colon \mathcal{M}\to\mathcal{M}_{1,3}$ and $\mathrm{pr}_{2}\colon \mathcal{M}\to\mathcal{M}_{7}$ are the corresponding canonical projections and $\mathrm{Vol_{4}}$ denotes the Lorentzian volume form on $\mathcal{M}_{1,3}$. With the choice of metric and four-form given in equations \eqref{eq:11dmetricproduct} and \eqref{eq:Gproduct} we guarantee that, given a particular Lorentzian four-dimensional manifold $(\mathcal{M}_{1,3},\mathsf{g}_{1,3})$, the equations of motion that the bosonic fields of the theory (the full metric $\mathsf{g}$ and four-form $\mathsf{G}$) need to satisfy, can written as a set of differential equation exclusively on $\mathcal{M}_{7}$. This will be important in section \ref{sec:modulispaces} in order to properly define the moduli space of this kind of solutions. 

Although the set-up presented above is more general, we will be usually interested in compactifications on maximally symmetric spaces $(\mathcal{M}_{1,3},\mathsf{g}_{1,3})$, which thus  will be taken to be the $AdS$  or Minkowski space. The $dS$-metric is discarded since it cannot be obtained supersymmetrically.


\subsection{Seven-dimensional supersymmetric $G_2$ structures} 


Supersymmetric solutions of eleven-dimensional Supergravity have to satisfy the Killing spinor equation of the theory, given by

\begin{equation}
\label{eq:N1susy11}
D_{v}\epsilon\equiv \nabla_{v}\epsilon + \frac{1}{6} \left(\iota_{v} \mathsf{G}\right) \epsilon +\frac{1}{12} \left(v^{\flat}\wedge \mathsf{G}\right)\epsilon = 0\, , \qquad \forall\,\, v\in \mathfrak{X}\left(\mathcal{M}\right)\, ,
\end{equation} 

\noindent
where $\mathsf{S}\to\mathcal{M}$ the 32-dimensional real, symplectic spinor bundle over $\mathcal{M}$, $\epsilon\in\Gamma\left(\mathsf{S}\right)$ is a Majorana spinor, $v^{\flat}$ is the one-form associated to $v$ through the metric and the terms in parenthesis act on the spinor via the Clifford multiplication.

Given the decomposition \eqref{eq:productmanifold} and the choice of metric \eqref{eq:11dmetricproduct}  we have to decompose the $Spin(1,10)$ spinor $\epsilon$ in terms of $Spin(1,3)\times Spin(7)$ representations. Let us denote by $\Delta_{\mathbb{R}}$ the 32-dimensional symplectic real representation of $Spin(1,10)$, by $\Delta^{+}_{1,3}$ and $\Delta^{-}_{1,3}$ the positive and negative chirality complex Weyl representation of $\mathrm{Spin}(1,3)\simeq Sl(2,\mathbb{C})$, of complex dimension two, and by $\Delta^{\mathbb{R}}_{7}$ the real Majorana representation of $\mathrm{Spin}(7)$. The branching rule $Spin(1,10)\to Spin(1,3)\times Spin(7)$ is given by\footnote{Let $V$ be a complex representation of a Lie group $G$. If $c\colon V\to V$ is a real structure on $V$, namely an invariant, antilineal map such that $c^2=1$, then $V=\left[V\right]\otimes_{\mathbb{R}}\mathbb{C}$, where $\left[V\right]$ is a real representation of $G$ of real dimension equal to the complex dimension of $V$.}

\begin{equation}
\label{eq:branchingrule}
\Delta^{\mathbb{R}}_{1,10} = \left[ \Delta^{+}_{1,3}\oplus \Delta^{-}_{1,3}\right]\otimes \Delta^{\mathbb{R}}_{7}\, ,
\end{equation}

\noindent
and therefore

\begin{equation}
\label{eq:11dspinor}
\epsilon = \xi\otimes \eta\, , \qquad\xi \in \Gamma\left[ S^{+}_{1,3}\oplus S^{-}_{1,3}\right]\, , \qquad \eta\in\Gamma\left(S^{\mathbb{R}}_{7}\right) \, ,
\end{equation}

\noindent
where $S^{\pm}_{1,3}$ is the positive (negative) chirality spin bundle over $\mathcal{M}_{1,3}$ and $S^{\mathbb{R}}_{7}$ is the real spin bundle over $\mathcal{M}_{7}$.

For each linearly independent eleven-dimensional spinor of the form (\ref{eq:11dspinor}) satisfying the Killing spinor equation (\ref{eq:N1susy11}), there is one out of 32 supersymmetries preserved. Therefore $\nu = \frac{1}{32}$ supersymmetry requires the existence of a single Majorana $Spin(7)$ spinor in the internal space $\mathcal{M}_{7}$\footnote{We could have also considered, as it is standard in the literature, the branching rule of the complexification of the supersymmetry spinor. Then, we would have ended up with two Majorana spinors on $\mathcal{M}_{7}$. Notice however that in that case the solutions are generically $\nu = \frac{1}{16}$ supersymmetric.}. 

The existence of a global spinor implies that the structure group of the spin bundle $S^{\mathbb{R}}_{7}$ is reduced from $Spin(7)$ to $G_2$.
 This structure can alternatively be defined by a positve 3-form $\varphi$\footnote{\label{foot:positive}A 3-form $\varphi$ is said to be positive \cite{Joyce2007} if at every point it can be written as $\varphi=dx^{123} + dx^{145} + dx^{167} + dx^{246} - dx^{257} - dx^{347} - dx^{356}$. Given a Majorana spinor $\eta$ and a metric $g_7$, the form $\varphi=\eta^T \gamma_{mnp} \eta \, dx^{mnp}$ is positive.}, which additionally
 induces a Riemannian metric $g_{7}$ that can explicitly be written as follows

\begin{equation}
\mathcal{V}_{7} g_{7}(v,w) = \frac{1}{3!}\iota_{v}\varphi\wedge\iota_{w}\varphi\wedge\varphi\, ,\qquad v, w \in \mathfrak{X}\left(\mathcal{M}_{7}\right)\, ,
\end{equation}

\noindent
where $\mathcal{V}_{7}$ is the Riemannian volume form of $\mathcal{M}_{7}$.

The supersymmetry conditions coming from the Killing spinor equation (\ref{eq:N1susy11}) can be translated into a set of differential conditions on $\varphi$.  In the absence of fluxes, and for a Minkowski four-dimensional vacuum these amount to $d \varphi=0, d*\varphi=0$, which implies that the structure is torsion-free, or equivalenetly that the manifold has $G_2$ holonomy. In the presence of fluxes the structure is not torsion-free anymore. It is useful to decompose the torsion classes into $G_2$ representations, namely 

\begin{equation}
\label{G2torsionclasses}
d\varphi=\tau_0 * \varphi + 3\,  \tau_1 \wedge \varphi+ * \tau_3 \ , \qquad d* \varphi=4 \,  \tau_1 \wedge * \varphi+\tau_2 \wedge  \varphi \, ,
\end{equation}

\noindent
where $\tau_{0,1,2,3}$ are a 0,1,2 and 3-form, respectively in the ${\bf 1},{\bf 7}, {\bf 14}$ and $ {\bf 27}$ representations of $G_2$. Supersymmetry relates these torsion classes to the different representations of the 4-form flux $G$. 
For a Minkowski vacuum, and particular decomposition of the supersymmetry spinor,  these relations can be found in \cite{Kaste:2003zd}.

 In this paper we are interested in the set of all supersymmetric solutions of the form given in \eqref{eq:11dmetricproduct} and \eqref{eq:Gproduct} as a space by itself; namely, we are interested in the moduli space of supersymmetric compactifications to four-dimensional $AdS$ or Minkowski space. Due to the presence of a non-trivial warp factor $e^{2A}$, a supersymmetric solution cannot be fully characterized by a set of first-order differential equations. Indeed, to obtain a supersymmetric solution to the full set of equations of motion of eleven-dimensional Supergravity one has to impose, on top of the supersymmetry conditions, the equation of motion for the flux. This is a second-order differential equation for $e^{2A}$. Hence, the moduli space of supersymmetric solutions to eleven-dimensional Supergravity cannot be characterized in terms of first-order differential equations. Therefore, we will consider instead the moduli space of supersymmetric configurations, which depends exclusively on the Killing spinor equation \ref{eq:N1susy11}, which is a first-order partial differential equation. We define then the concept of supersymmetric $G_{2}$-structure, which  will be used in section \ref{sec:modulispaces}, in the following way

\begin{definition}
\label{def:susyG2}
Let $(\mathcal{M}_{1,3},\mathsf{g}_{1,3})$ be a particular Lorentzian, four-dimensional, oriented and spin manifold and let $\mathcal{M}_{7}$ be a fixed seven-dimensional, oriented, spin manifold. A supersymmetric $G_{2}$-structure relative to $(\mathcal{M}_{1,3},\mathsf{g}_{1,3})$ is a quadruplet $\left(\varphi,G,e^{2A},\mu\right)$, where $\varphi\in\Omega_{+}\left(\mathcal{M}_{7}\right)$ is a positive three-form, $G\in\Omega_{cl}\left(\mathcal{M}_{7}\right)$ is a closed four-form, $e^{2A} \in C^{\infty}\left(\mathcal{M}_{7}\right)$ is a function and $\mu\in \mathbb{R}$, such that combined with the four-dimensional data in the form given by \eqref{eq:11dmetricproduct}, \eqref{eq:Gproduct}, \eqref{eq:11dspinor}, it obeys the Killing spinor equation \eqref{eq:N1susy11}\footnote{The Killing spinor equation \eqref{eq:N1susy11} is written in terms of the spinor $\eta$ instead of the 3-form $\varphi$, but from it one can obtain equations for $\varphi$ in terms of the components of $G$ and the warp factor (see (3.25) in reference \cite{Kaste:2003zd}).}.
\end{definition} 

\begin{remark}
Notice that depending on the choice of four-dimensional manifold $(\mathcal{M}_{1,3},\mathsf{g}_{1,3})$ the set of corresponding supersymmetric $G_{2}$-structures may be empty, for example if $(\mathcal{M}_{1,3},\mathsf{g}_{1,3})$ is $dS$-space, which is known to be non-supersymmetric. When talking about the set of supersymmetric $G_{2}$-structures we will always omit the space they are relative to, assuming that it is one such that the set is non-empty. 
\end{remark}



\subsection{Eight-dimensional intermediate manifolds}


Using the tools of generalized geometry, 
compactifications of M-theory down to four-dimensions with  $\mathcal{N}=1$ supersymmetry were geometrically characterised in \cite{Pacheco:2008ps} in terms of the space of sections of a real, 912-rank, vector bundle over $\mathcal{M}_{7}$

\begin{equation}
\label{eq:EbundleER}
\mathcal{E}\to\mathcal{M}_{7}\, ,
\end{equation}

\noindent
with structure group $E_{7(7)}\times\mathbb{R}^{+}$,  where $E_{7(7)}$ stands for the maximally non-compact real form of $E_{7}$ acting in the ${\bf 912}$ representation\footnote{Here ${\bf 912}$ denotes the real representation induced on $E_{7(7)}$ by the {\bf 912} of the exceptional complex Lie group $E_{7} = E_{7(7)}\otimes\mathbb{C}$. The ${\bf 912}$ representation of $E_{7}$ can be characterized as a particular subspace of $V_{56}\otimes \mathfrak{e}_{7}$, see appendix B in reference \cite{Pacheco:2008ps}.} and $\mathbb{R}^{+}$ represents conformal rescalings. 

 Let us denote by $N_{{\bf 912}}$ the vector space corresponding to the $\bf{912}$ representation of $E_{7(7)}\times\mathbb{R}^{+}$. In terms of $Sl(8,\mathbb{R})$ representations, it decomposes according to\footnote{\label{foot:E7subgroups}The real group $E_{7(7)}$ contains two important subgroups, namely $Sl(8,\mathbb{R})$ and $SU(8)/\mathbb{Z}_{2}$. The ${\bf 912}$ decomposes 
\begin{eqnarray}
{\bf 912} &=& \left[ {\bf 36} +{\bf 420} + {\bf \overline{36}} + {\bf \overline{420}}\right]\, ,\qquad SU(8)/\mathbb{Z}_{2}\subset E_{7(7)}\, , \nonumber\\
{\bf 912} &=&  {\bf 36} +{\bf 420} + {\bf 36}^{\prime} + {\bf 420}^{\prime}\, ,\qquad Sl(8,\mathbb{R})\subset E_{7(7)}\, , \nonumber
\end{eqnarray}

\noindent
where $\left[\cdot\right]$ denotes the corresponding induced real representation.}

\begin{eqnarray}
\label{eq:N912decomposition}
N_{912} & \simeq & S^2 V^{\ast} \oplus\left(\Lambda^3 V^{\ast} \otimes V \right)_{0} \oplus S^2V \oplus\left(\Lambda^3 V\otimes V^{\ast}\right)_{0}\, ,
\end{eqnarray}

\noindent
where $V$ is an eight-dimensional real vector space where $Sl(8,\mathbb{R})$ acts in the vector representation, $S$ stands for symmetric and the subscript 0 denotes traceless. 
%
%
%
%
We can apply the decomposition \eqref{eq:N912decomposition} to the bundle $\mathcal{E}$ \eqref{eq:EbundleER}, since at every point $p\in\mathcal{M}_{7}$, $\mathcal{E}_{p}$ is the {\bf 912} irreducible representation of $E_{7(7)}$. Hence we can write (\ref{eq:N912decomposition}) for $\mathcal{E}$
%
%
where now $V$ would denote a rank-eight real vector bundle over $\mathcal{M}_{7}$

\begin{equation}
V\to\mathcal{M}_{7}\, ,
\end{equation}

\noindent
with structure group $Sl(8,\mathbb{R})\times\mathbb{R}^{+}$ acting in the {\bf 8} of $Sl(8,\mathbb{R})$. Therefore, every section $\phi\in\Gamma\left(\mathcal{E}\right)$ can be decomposed as in \eqref{eq:N912decomposition}
%
%
%

Given a spinor  $\eta\in\Delta^{\mathbb{R}}_{7}$, there exists a canonical element $\psi^{\eta} \in N_{{\bf 912}}$ which is stabilized by $SU(7)\subset E_{7(7)}$ \cite{Pacheco:2008ps,Grana:2012zn} and can be written as follows in terms of $SU(8)$ representations (see footnote \ref{foot:E7subgroups})
\begin{equation}
\psi^\eta=(\eta \, \eta,0,0,0) \ .
\end{equation}
%
%

\begin{definition}
An element $\psi \in N_{912}$ is said to be E-admissible (or in the orbit of $\psi^\eta$), if there is a transformation $g\in E_{7(7)}\times\mathbb{R}^{+}$ such that

\begin{equation}
g\cdot \psi = \psi^{\eta}\, .
\end{equation}
%
\end{definition}

\begin{definition}
A section $\psi \in \Gamma\left(\mathcal{E}\right)$  is said to be E-admissible if, for every $p\in\mathcal{M}$, $\psi_{p}\in N_{912}$ is an E-admissible element of $N_{912}$.
\end{definition}

\begin{cor}
Let $\mathcal{E}\to\mathcal{M}$ be a rank-912 vector bundle with structure group $E_{7(7)}\times\mathbb{R}^{+}$. If $\mathcal{E}$ is equipped with a E-admissible section $\psi\in \Gamma\left(\mathcal{E}\right)$, then it admits a topological reduction of the frame bundle from $E_{7(7)}\times\mathbb{R}^{+}$ to $SU(7)$.
\end{cor}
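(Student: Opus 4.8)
The plan is to invoke the standard dictionary between sections of an associated fibre bundle that are valued in a single orbit and reductions of the structure group of the underlying principal bundle. Write $P\to\mathcal{M}$ for the principal $E_{7(7)}\times\mathbb{R}^{+}$-bundle (the ``frame bundle'' of $\mathcal{E}$), so that $\mathcal{E}\simeq P\times_{E_{7(7)}\times\mathbb{R}^{+}}N_{\mathbf{912}}$. A section $\psi\in\Gamma(\mathcal{E})$ is the same datum as an $E_{7(7)}\times\mathbb{R}^{+}$-equivariant map $\tilde\psi\colon P\to N_{\mathbf{912}}$, with $\tilde\psi(p\cdot g)=g^{-1}\cdot\tilde\psi(p)$. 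By definition, $\psi$ is E-admissible if and only if the image of $\tilde\psi$ lies entirely in the orbit $\mathcal{O}:=(E_{7(7)}\times\mathbb{R}^{+})\cdot\psi^{\eta}$.

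First I would establish that $\mathcal{O}$ is an embedded submanifold of $N_{\mathbf{912}}$ and that the orbit map descends to a diffeomorphism $\mathcal{O}\cong(E_{7(7)}\times\mathbb{R}^{+})/SU(7)$. Since the action of the linear algebraic group $E_{7(7)}\times\mathbb{R}^{+}$ on the finite-dimensional vector space $N_{\mathbf{912}}$ is algebraic, every orbit is a locally closed smooth submanifold, and the stabiliser of $\psi^{\eta}$ is $SU(7)$ by the structure-theoretic fact recalled above from \cite{Pacheco:2008ps,Grana:2012zn} (here one checks that adjoining the conformal $\mathbb{R}^{+}$ does not enlarge the stabiliser, i.e. the only element of $E_{7(7)}\times\mathbb{R}^{+}$ rescaling $\psi^{\eta}$ back to $\psi^{\eta}$ is the identity on the $SU(7)$-slice). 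It then follows that the subbundle $P\times_{E_{7(7)}\times\mathbb{R}^{+}}\mathcal{O}\subseteq\mathcal{E}$ is canonically identified with $P/SU(7)\to\mathcal{M}$.

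Next I would set $Q:=\tilde\psi^{-1}(\psi^{\eta})\subseteq P$ and check that it is a principal $SU(7)$-subbundle of $P$. Fibrewise: for $x\in\mathcal{M}$ and $p\in P_{x}$ we have $\tilde\psi(p)\in\mathcal{O}$, so there is $g$ with $\tilde\psi(p\cdot g)=g^{-1}\cdot\tilde\psi(p)=\psi^{\eta}$, hence $Q\cap P_{x}\neq\varnothing$; and if $p,\,p\cdot g\in Q\cap P_{x}$ then $\psi^{\eta}=\tilde\psi(p\cdot g)=g^{-1}\cdot\psi^{\eta}$ forces $g\in SU(7)$, so $Q\cap P_{x}$ is a torsor over $SU(7)$. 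Smoothness and local triviality of $Q$ follow because the orbit map $E_{7(7)}\times\mathbb{R}^{+}\to\mathcal{O}$ is a submersion, so $\tilde\psi$ is transverse to $\{\psi^{\eta}\}$ and $Q$ is a smooth submanifold meeting each fibre of $P$ in an $SU(7)$-orbit; equivalently, $Q$ is exactly the image of the section of $P/SU(7)$ determined by $\psi$ under the identification of the previous paragraph. This $Q$ is the claimed topological reduction from $E_{7(7)}\times\mathbb{R}^{+}$ to $SU(7)$.

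The main obstacle is the second step, i.e. the orbit-theoretic input: one needs $\mathcal{O}$ to be a genuinely embedded homogeneous submanifold and one must pin down its stabiliser precisely. Embeddedness is safe because the action is algebraic, but confirming that the stabiliser is exactly $SU(7)\subset E_{7(7)}\times\mathbb{R}^{+}$ — and in particular that the conformal factor contributes nothing new — is the delicate point, and it is precisely the content of the result quoted from \cite{Pacheco:2008ps}. Everything else is the formal, essentially tautological, correspondence between E-admissible sections, orbit-valued equivariant maps, and reductions of principal bundles.
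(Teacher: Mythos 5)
Your proof is correct, and it is precisely the standard orbit/reduction-of-structure-group argument that the paper leaves implicit (the corollary is stated without proof, with the key input — that the stabiliser of $\psi^{\eta}$ in $E_{7(7)}\times\mathbb{R}^{+}$ is $SU(7)$ — quoted from \cite{Pacheco:2008ps,Grana:2012zn} and used to write $N^{E}_{\bf 912}\simeq (E_{7(7)}/SU(7))\times\mathbb{R}^{+}$). You have simply spelled out the tautological dictionary between orbit-valued equivariant maps $\tilde\psi\colon P\to\mathcal{O}$ and principal $SU(7)$-subbundles $Q=\tilde\psi^{-1}(\psi^{\eta})\subset P$, which is exactly what the paper relies on.
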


\noindent
Off-shell ${\mathcal N}=1$ supersymmetry  implies the existence of an E-admissible section $\psi\in \Gamma\left(\mathcal{E}\right)$, and thus a topological reduction on $\mathcal{E}$ from $E_{7(7)}\times\mathbb{R}^{+}$ to $SU(7)$ and vice-versa, such a topological reduction implies the existence of a globally defined spinor on $\mathcal{M}_{7}$. It is then necessary to take a closer look at the space of E-admissible sections, since it corresponds to the space of topological reductions on $\mathcal{E}$ implied by supersymmetry. Notice that the space of E-admisible elements $N^{E}_{{\bf 912}}\subset N_{{\bf 912}}$ is transitive under the action of $E_{7(7)}\times\mathbb{R}^{+}$, and since every E-admissible element is stabilized by $SU(7)$ we obtain 

\begin{equation}
N^{E}_{{\bf 912}}\simeq \frac{E_{7(7)}}{SU(7)}\times\mathbb{R}^{+}\, .
\end{equation}

\noindent
Hence, the space of E-admissible elements $N^{E}_{{\bf 912}}$ is an 85-dimensional real homogeneous manifold times $\mathbb{R}^{+}$, and it was conjectured to be a K\"ahler-Hodge manifold\footnote{To the best of our knowledge this has not been proven yet} in reference \cite{Pacheco:2008ps} in order to make contact with the geometry of the non-linear sigma-model of $\mathcal{N}=1$ Supergravity. Let us define then the fibre bundle

\begin{equation}
\mathcal{E}_{E} \to \mathcal{M}_{7}\, ,
\end{equation}

\noindent
with fibre $\mathcal{E}_{E\, p}\simeq  \frac{E_{7(7)}}{SU(7)}\times\mathbb{R}^{+}\, , \, \, p\in\mathcal{M}_{7}$.
The set $\tilde{\mathcal{X}}$ of all E-admissible structures on $\mathcal{E}$ can be thus written as the space of sections of $\mathcal{E}_{E}$

\begin{equation}
\tilde{\mathcal{X}} = \left\{\psi\,\, |\,\, \psi\in \Gamma\left(\mathcal{E}_{E}\right)\right\}\, .
\end{equation}

\noindent
In order to properly define the moduli space of supersymmetric compactifications as a subset of $\tilde{\mathcal{X}}$ there are two missing pieces. These are the differential conditions implied on $\psi\in\Gamma\left(\mathcal{E}_{E}\right)$ by supersymmetry, which come from the  Killing spinor equation \eqref{eq:N1susy11}, and the quotient by the appropriate equivalence relation. If we had them we could readily characterize the moduli space of $\mathcal{N}=1$ compactifications to four dimensions as follows

\begin{equation}
\mathcal{X} = \left\{\psi\,\, |\,\, \tilde{d}\psi = 0\, , \,\, \psi\in \Gamma\left(\mathcal{E}_{E}\right)\right\}/{\cal D}\, .
\end{equation}

\noindent
where $\tilde{d}$ is an appropriate coboundary operator and ${\cal D}$ are generalized diffeomorphisms, combining diffeomorphisms and gauge transformations. However, given on one hand that we do not know the differential conditions, and on the other that there are not yet available any mathematical results on deformations on these structures, we will 
pursue here a different path, more straightforward in some sense, though restricted to particular set of supersymmetric backgrounds.  

We have seen that off-shell ${\cal N}=1$ supersymmetry is equivalent to the existence of an E-admissible section, which at every point $p\in\mathcal{M}_{7}$ can be written as as follows in terms of $SL(8,{\mathbb R})$ representations  (cf. Eq. (\ref{eq:N912decomposition})) \cite{Grana:2012zn}

\begin{equation}
\label{eq:canonicaldecompositionII}
\psi = g_{8}\oplus \phi \oplus g^{-1}_{8}\oplus \tilde{\phi}\, ,
\end{equation}

\noindent
where $g_{8}$ is a Riemannian metric on $V\to\mathcal{M}_{7}$ and $\phi$, $\tilde{\phi}$ are appropriate elements in the second and fourth components of (\ref{eq:N912decomposition}). Therefore $(g_{8},\phi,\tilde{\phi})$ carry all the information of the E-admissible section of $\mathcal{E}$.
However, $\mathcal{E}$ is an extrinsic bundle over $\mathcal{M}_{7}$, while it would be desirable to have a description given in terms of intrinsic bundles, namely tensor bundles of a given manifold, since they are easier to handle. 

Given that the exceptional bundle $\mathcal{E}$ can be written as the Whitney sum of rank-eight real vector bundles over the seven dimensional base $\mathcal{M}_{7}$,  we propose an eight-dimensional manifold $\mathcal{M}_{8}$ such that it contains in its tensor bundles all the information that is present in the set of E-admissible sections of $\mathcal{E}$. This way, the role of the rank-eight vector bundle $V\to\mathcal{M}_{7}$ is played by the tangent bundle $T\mathcal{M}_{8}$ of $\mathcal{M}_{8}$, which then must be equipped with the corresponding sections to contain a decomposition of  the ${\bf 912}$ representation of $E_{7(7)}\times \mathbb{R}^{+}$ in terms of tensors over $\mathcal{M}_{8}$. This motivates the following definition.

\begin{definition}
\label{def:intermediatemanifolds}
Let $\mathcal{M}_{8}$ be an eight-dimensional, oriented, spin differentiable manifold. We say that $\mathcal{M}_{8}$ is an  intermediate manifold if it is equipped with the following data
\begin{itemize}

\item A Riemannian metric $g\in\Gamma\left( S^2 T^{\ast}\mathcal{M}_{8}\right)$.

\item A globally defined section $\phi\in\Gamma\left(\Lambda^{3} T^{\ast}\mathcal{M}_{8}\otimes T\mathcal{M}_{8}\right)_{0}$.

\item A global section globally defined $\tilde{\phi}\in\Gamma\left(\Lambda^{3} T\mathcal{M}_{8}\otimes T^{\ast}\mathcal{M}_{8}\right)_{0}$.

\end{itemize}

\noindent
We will say then that $\mathfrak{S} = \left(g,\phi, \tilde{\phi}\right)$ is an intermediate structure on $\mathcal{M}_{8}$.
\end{definition}

\noindent
Therefore, we relate to every compactification manifold $\mathcal{M}_{7}$ an eight-dimensional manifold $\mathcal{M}_{8}$ whose physical role remains to be found. 

Since an intermediate manifold $\left(\mathcal{M}_{8},\mathfrak{S}\right)$ is equipped with the same tensors than those implied by an E-admissible section on $\mathcal{E}$, we expect a very close relation between the space $\tilde{\mathcal{X}}$ of E-admissible sections and the space of intermediate structures on $\mathcal{M}_{8}$, which we will denote by $\tilde{\mathfrak{H}}$. Now, the difference between the description in terms of $\tilde{\mathcal{X}}$ and the description in terms of $\tilde{\mathfrak{H}}$ is that on one hand in the latter we expect  the \emph{differential conditions} that are implied by supersymmetry to have a simpler form and on the other it is a more manageable space, since it consists of sections of tensor bundles, instead of sections of an extrinsic exceptional bundle. However, appropriately lifting  the set of forms, together with their differential conditions, to the eight-dimensional manifold $(\mathcal{M}_{8},\mathfrak{S})$ is a tricky step, since there is not an obvious or unique way of doing it. As explained in \cite{Grana:2014rpa}, in the simplest case where the intermediate manifold $\left(\mathcal{M}_{8},\mathfrak{S}\right)$ is such that

\begin{equation}
\phi = \phi_{3}\otimes v\, ,  \qquad \iota_{v}\phi_{3} = 0\, ,
\end{equation}

\noindent
where $\phi_3 \in \Omega^3 ({\mathcal M}_8)$ and $v\in\mathfrak{X}\left(\mathcal{M}_{8}\right)$. In reference \cite{Grana:2014rpa} we showed that it is natural to consider $\mathcal{M}_{8}$ either as a rank-one vector bundle over $\mathcal{M}_{7}$ or as having a codimension-one foliation in terms of leaves such that at least one of them is diffeomorphic to $\mathcal{M}_{7}$. Furthermore, if $\phi_3$ is the unique lift of a positive positive three-form on $\mathcal{M}_{7}$, the following four-form \cite{Grana:2014rpa}

\begin{equation}
\Omega = v^{\flat}\wedge  \phi_3+ \ast\left(v^{\flat}\wedge  \phi_3 \right)\, ,
\end{equation}

\noindent 
is admissible\footnote{\label{foot:admissible}A four-form is said to be admissible if it can be written at every point  as 
$\Omega_{0} = dx^{1234} + dx^{1256} + dx^{1278} + dx^{1357} - dx^{1368} - dx^{1458}  - dx^{1467}
 - dx^{2358} - dx^{2367}- dx^{2457} +  dx^{2468} + dx^{3456} + dx^{3478} + dx^{5678}.$}, or in other words it defines a $Spin(7)$-structure on $\mathcal{M}_{8}$. 

We now turn to the differential conditions on the intermediate manifold imposed by supersymmetry.


\subsection{Spin(7) holonomy on ${\mathcal M}_8$}
\label{sec:Spin7Holonomy}


We will not attempt to uplift the differential conditions in their full generality, but we note that remarkably enough, one can impose a very natural and useful condition on $\mathcal{M}_{8}$ that basically fixes the way in which $\mathcal{M}_{8}$ is constructed from $\mathcal{M}_{7}$, and provides $\mathcal{M}_{8}$ with a nice geometrical structure with more well-known mathematical properties. This condition is the requirement that $(\mathcal{M}_{8},\mathfrak{S})$ has at least $Spin(7)$-holonomy. Restricting to a given class of $G_2$ structures, $\mathcal{M}_{8}$ can be built from $\mathcal{M}_{7}$ in such a way that for every restricted $G_{2}$-structure on $\mathcal{M}_{7}$ there is a unique torsion-free $Spin(7)$-structure on $\mathcal{M}_{8}$. As we will see in section \ref{sec:differentialconditions}, one can describe this way a particular set of torsion-full $G_2$ structures, namely those with constant $\tau_0$ and harmonic $\tau_2$ torsion classes (see Eq. (\ref{G2torsionclasses})).

There are basically two reasons to impose $Spin(7)$-holonomy on $(\mathcal{M}_{8},\mathfrak{S})$, namely

\begin{itemize}

\item Due to the presence of fluxes, $\mathcal{M}_{7}$ is equipped with a $G_{2}$-structure which is not torsion free. The moduli space of such $G_{2}$-structures is very poorly understood, and thus if we \emph{relate} $\mathcal{M}_{7}$ to an eight-dimensional manifold $\mathcal{M}_{8}$ in such a way that the supersymmetric $G_{2}$-structure induces a torsion-free $Spin(7)$-structure, we are \emph{geometrizing} the fluxes and at the same time precisely relating $\mathcal{M}_{7}$ with a manifold $\mathcal{M}_{8}$ whose moduli space\footnote{Meaning the moduli space of torsion-free $Spin(7)$ structures on $\mathcal{M}_{8}$.} is much more under control. Notice that this is an interesting problem by itself, aside from the study of intermediate manifolds. A word of caution is needed here though: constructing a compact $Spin(7)$-holonomy manifold $\mathcal{M}_{8}$ \emph{from} $\mathcal{M}_{7}$ is an extremely difficult task, one of the reasons being that a compact $Spin(7)$-holonomy manifold has no isometries. On the other hand, constructing a non-compact $\mathcal{M}_{8}$ is doable, but then in that case the moduli space of non-compact $Spin(7)$-manifolds is not as well understood as in the compact counterpart, although some results are known for cases with controlled asymptotics\footnote{We thank Dominic Joyce for a clarification about this point.}.  

\item The eight-dimensional manifold $\mathcal{M}_{8}$ is constructed from the physical seven-dimensional one $\mathcal{M}_{7}$ and thus it is natural to study if it has some physical significance beyond being an auxiliary tool. Since we are considering compactifications to four-dimensions and $\mathcal{M}_{8}$ is eight-dimensional, it is natural to study the possibility of $\mathcal{M}_{8}$ being an admissible compactification background for F-theory. Therefore it is reasonable to impose at least $Spin(7)$-holonomy, since a large class of F-theory internal spaces have special holonomy. Of course, in addition $(\mathcal{M}_{8},\mathfrak{S})$ must be elliptically fibered. Indeed it was shown in \cite{Grana:2014rpa} that an intermediate structure $\mathfrak{S}$ completely characterizes an elliptic fibration on $\mathcal{M}_{8}$, pointing out to a possible role for $\mathcal{M}_{8}$ as an F-theory internal space. The question is then what would be the relation between M-theory compactified on $\mathcal{M}_{7}$ and F-theory compactified on $\mathcal{M}_{8}$, and this could be the source of new M/F-theory dualities. There are in the literature examples of this kind of constructions arising from M/String-theory dualities. For instance, in reference \cite{Kaste:2002xs} it was shown that a configuration of D6-branes wrapping a special Lagrangian cycle of a non-compact Calabi-Yau manifold such that the internal string frame metric is K\"ahler, is dual to a purely geometrical background in eleven dimensions with an internal metric of $G_{2}$-holonomy. Other examples of torsion-free structures dual to configuration with fluxes, in other words, non-zero torsion, can be found in references \cite{2002math......2282C,2004CMaPh.246...43A,Santillan:2006rr,Giribet:2006xn}, this time from six-dimensions to a $G_{2}$-holonomy. For further examples and results in the $Spin(7)$ and $G_{2}$ cases see reference \cite{Salur:2008pi,delaOssa:2014lma}.
 
\end{itemize}


\section{Examples}
\label{sec:differentialconditions}


In this section we construct examples of intermediate manifolds with $Spin(7)$ holonomy starting from supersymmetric $G_2$ structures in seven dimensions.
%


\subsection{$Spin(7)$-manifolds from trivial products of $G_{2}$-structure manifolds and $\mathbb{R}$}
\label{sec:Spin7cone}


Let $\mathcal{M}_{7}$ be a seven-dimensional oriented manifold equipped with a positive three-form $\varphi$. Let $\mathcal{M}_{8} = \mathbb{R}^{+}\times\mathcal{M}_{7}$ be the topologically trivial product of $\mathbb{R}^{+}$ and $\mathcal{M}_{7}$, which we take to be oriented with volume form $\mathcal{V}_{8} = dt\wedge\mathcal{V}_{7}$. We define the following four-form $\Omega$ on $\mathcal{M}_{8}$

\begin{equation}
\label{eq:Omegafromvarphi}
\Omega = f_{1}\, dt\wedge\varphi + f_{2}\,\ast_{7}\varphi\, ,
\end{equation}

\noindent
where $t$ is the natural coordinate on $\mathbb{R}^{+}$ and $f_{1}, f_{2}\in C^{\infty}\left(\mathcal{M}_{8}\right)$ are positive functions on $\mathcal{M}_{8}$. 

\begin{lemma}
\label{lemma:admissibleRM7}
The four-form $\Omega$ as in \eqref{eq:Omegafromvarphi} is an admissible four-form in $\mathcal{M}_{8}$ if and only if $\varphi$ is a positive form on $\mathcal{M}_{7}$. 
\end{lemma}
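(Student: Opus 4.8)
The plan is to reduce the statement to a pointwise linear-algebra claim and then verify it at a point by exhibiting an explicit adapted coframe. Recall that a four-form on an eight-manifold is \emph{admissible} (defines a $Spin(7)$-structure) precisely when at every point there is a coframe $\{e^1,\dots,e^8\}$ in which it equals the model form $\Omega_0$ of footnote \ref{foot:admissible}; similarly $\varphi$ is positive on $\mathcal{M}_7$ iff at every point there is a coframe $\{e^1,\dots,e^7\}$ putting it into the model form of footnote \ref{foot:positive}. So both conditions are pointwise and $GL$-invariant, and the lemma is really: given the splitting $T\mathcal{M}_8 = \mathbb{R}\partial_t \oplus T\mathcal{M}_7$, the four-form $\Omega = f_1\, dt\wedge\varphi + f_2\, \ast_7\varphi$ is in the $GL(8,\mathbb{R})$-orbit of $\Omega_0$ iff $\varphi$ is in the $GL(7,\mathbb{R})$-orbit of the model three-form.

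First I would establish the ``if'' direction, which is the substantive one. Assume $\varphi$ is positive at $p\in\mathcal{M}_7$; then there is a coframe $e^1,\dots,e^7$ of $T^\ast_p\mathcal{M}_7$ with $\varphi = e^{123}+e^{145}+e^{167}+e^{246}-e^{257}-e^{347}-e^{356}$, and in this same coframe $\ast_7\varphi$ is the standard co-associative four-form $e^{4567}+e^{2367}+e^{2345}+e^{1357}-e^{1346}-\dots$ (the Hodge dual computed with the metric $g_7$ that $\varphi$ induces, which in an adapted coframe is the standard Euclidean one up to the volume normalization $\mathcal{V}_7$). Now set $e^8 := $ a rescaling of $dt$; the key computational step is to choose the rescalings of $dt$ and of the $e^i$ by the positive constants $f_1^{1/4}, f_2^{1/4}$ (and appropriate powers) so that $f_1\, dt\wedge\varphi + f_2\,\ast_7\varphi$ becomes exactly $e^8\wedge(\text{standard }\varphi) + (\text{standard }\ast_7\varphi)$, which one checks term-by-term equals the Cayley form $\Omega_0$ under the standard identification of indices. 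Concretely one compares $e^8\wedge\varphi_{\mathrm{std}}$ with the seven $\Omega_0$-terms containing the index $8$ and $\ast_7\varphi_{\mathrm{std}}$ with the seven not containing $8$; this is the matching that makes $\Omega_0$ ``admissible''. I would not grind through all fourteen terms but would note that the sign conventions in footnotes \ref{foot:positive} and \ref{foot:admissible} are chosen precisely so this works, possibly after a fixed relabelling/orientation of the $e^i$.

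For the ``only if'' direction, suppose $\Omega$ is admissible at $p$. The $Spin(7)$-structure determines a Riemannian metric $g_8$ on $T_p\mathcal{M}_8$ and in particular, since $\Omega = f_1\, dt\wedge\varphi + f_2\,\ast_7\varphi$ with $f_1,f_2>0$, the vector $\partial_t$ is nonzero and one can consider the three-form $\iota_{\partial_t}\Omega = f_1\,\varphi$ on $T_p\mathcal{M}_7$ (using $\iota_{\partial_t}\ast_7\varphi = 0$ since $\ast_7\varphi$ is pulled back from $\mathcal{M}_7$). The claim is that the contraction of the Cayley form by any nonzero vector, restricted to the orthogonal complement, is always a positive three-form on that $7$-dimensional space — this is a standard fact ($Spin(7)$ acts transitively on the unit sphere in $\mathbb{R}^8$ with stabilizer $G_2$, and the induced three-form on the orthogonal $\mathbb{R}^7$ is the associative form). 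Hence $f_1\,\varphi$, and therefore $\varphi$, is positive at $p$. Running this at every point gives the result.

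The main obstacle I anticipate is purely bookkeeping: getting the Hodge-star conventions, the normalization factor $\mathcal{V}_7$ in the induced metric, and the precise sign pattern of $\Omega_0$ versus $dt\wedge\varphi + \ast_7\varphi$ to line up, including whether an orientation-reversal or a permutation of the model coframe is needed to land exactly on the $\Omega_0$ of footnote \ref{foot:admissible}. The conceptual content — both conditions are $GL$-invariant pointwise conditions, $Spin(7)\cap \mathrm{Stab}(\text{line})$ relates to $G_2$, and the functions $f_1,f_2$ are absorbed by rescaling the coframe — is straightforward; the risk is a sign error in the explicit orbit computation, so I would organize the proof so that the fourteen-term check is isolated into a single displayed equation that the reader can verify directly.
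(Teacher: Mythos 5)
The paper states this lemma without giving a proof (it is treated as a routine pointwise fact, with the $f_1=f_2=1$ version absorbed into the standard $Spin(7)$-cone setup of Joyce's Proposition~13.1.3, cited immediately after). Your argument supplies a correct proof and takes the natural route: reduce to a single tangent space, then do linear algebra in an adapted coframe.

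Two small points are worth making precise. In the ``if'' direction, the coframe rescaling that absorbs $f_1,f_2$ is $\tilde{e}^i = f_2^{1/4}\,e^i$ for $i=1,\dots,7$ and $e^8 = f_1 f_2^{-3/4}\,dt$: then $f_1\,dt\wedge\varphi + f_2\,\ast_7\varphi = e^8\wedge\tilde{\varphi}_{\mathrm{std}} + (\tilde{\ast}\tilde{\varphi})_{\mathrm{std}}$, which is the Cayley form of footnote~\ref{foot:admissible} after the cyclic relabelling that moves the extra index to the front (indeed $\iota_{\partial_1}\Omega_0$ restricted to $\{2,\dots,8\}$ reproduces the $\varphi_0$ of footnote~\ref{foot:positive} under $k\mapsto k-1$). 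So ``powers of $f_1^{1/4},f_2^{1/4}$'' is the right idea but the precise exponents are as above; your hedge covers this. In the ``only if'' direction, the standard fact you invoke concerns the restriction of $\iota_v\Omega_0$ to the \emph{metric-orthogonal} complement $v^{\perp}$, whereas you restrict to $T_p\mathcal{M}_7$, which need not be $g_8$-orthogonal to $\partial_t$. This is not a real gap, but it deserves a sentence: since $\iota_{\partial_t}(\iota_{\partial_t}\Omega)=0$, the restrictions of $\iota_{\partial_t}\Omega$ to any two complements of $\mathbb{R}\partial_t$ agree under the linear isomorphism given by projection along $\partial_t$, so positivity of the restriction is complement-independent and your conclusion that $f_1\varphi$ (hence $\varphi$) is positive follows.
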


\noindent
The four-form $\Omega$ defines therefore a unique topological reduction on $F\left(\mathcal{M}_{8}\right)$ from $Gl_{+}\left(8,\mathbb{R}\right)$ to $Spin(7)$, which however is redundant since it can be further reduced to $G_{2}$. Notice however that the holonomy of $\mathcal{M}_{8}$ is not $G_{2}$, unless, as we will see in a moment, $f_{1} = f_{2} =1$, but in that case $\mathcal{M}_{8}$ is a reducible Riemannian manifold and thus not covered by Berger's list of possible holonomy groups. Furthermore, the admissible four-form $\Omega$ induces a Riemannian metric $g_{8}$ on $\mathcal{M}_{8}$\footnote{Since the expression for $g_{8}$ in terms of $\Omega$ is not particularly illuminating and will not be needed, we do not write it here. It can be found in \cite{2003math......1218K}, theorem 4.3.5.}. Taking $f _1 = f_2 = 1$ the following result can be proven.

\begin{prop}{\bf [Proposition 13.1.3 \cite{Joyce2007}]}
Let $\left(\mathcal{M}_{8},\Omega\right)$ be as above and let us take $f _1 = f_2 = 1$. Then, $\left(\Omega,g_{8}\right)$ is a torsion-free $Spin(7)$ structure on $\mathcal{M}_{8}$ if and only if $\left(\varphi,g_{7}\right)$ is a torsion-free $G_{2}$ structure. The associated metric induced by $\Omega$ is $g_{8} = dt^2\times g_{7}$.
\end{prop}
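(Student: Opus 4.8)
The plan is to reduce the ``torsion-free'' condition on each side to a first-order \emph{closure} condition and then to match the two by a direct computation of $d\Omega$. I would use the classical facts that a $Spin(7)$-structure with defining admissible four-form $\Omega$ and associated metric $g_{8}$ is torsion-free---equivalently $\nabla^{g_{8}}\Omega=0$, equivalently $\mathrm{Hol}(g_{8})\subseteq Spin(7)$---if and only if $d\Omega=0$ (a result of Fern\'andez), and that a $G_{2}$-structure $(\varphi,g_{7})$ is torsion-free if and only if $d\varphi=0$ and $d\ast_{7}\varphi=0$, i.e.\ if and only if the torsion classes $\tau_{0},\tau_{1},\tau_{2},\tau_{3}$ of \eqref{G2torsionclasses} all vanish (Fern\'andez--Gray; this equivalence is recalled above). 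Granting these, the statement is essentially computational; note also that by Lemma~\ref{lemma:admissibleRM7} with $f_{1}=f_{2}=1$ the four-form $\Omega$ \emph{is} a $Spin(7)$-structure for every positive $\varphi$, so the assertion is meaningful.

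The computation I would carry out is as follows. On the product $\mathcal{M}_{8}=\mathbb{R}^{+}\times\mathcal{M}_{7}$ write $d=dt\wedge\partial_{t}+d_{7}$; since $\varphi$, hence $\ast_{7}\varphi$, is the pull-back of a $t$-independent form on $\mathcal{M}_{7}$, one gets
\[
d\Omega \;=\; d\bigl(dt\wedge\varphi\bigr)+d\bigl(\ast_{7}\varphi\bigr)\;=\;-\,dt\wedge d_{7}\varphi\;+\;d_{7}\!\ast_{7}\!\varphi .
\]
The first summand always carries a $dt$ leg while the second never does, so they are pointwise linearly independent; hence $d\Omega=0$ if and only if $d_{7}\varphi=0$ \emph{and} $d_{7}\!\ast_{7}\!\varphi=0$. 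By the $G_{2}$ characterisation this is exactly the condition that $(\varphi,g_{7})$ be torsion-free, which together with the $Spin(7)$ characterisation establishes both implications at once.

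For the metric claim I would argue pointwise: the metric determined by an admissible four-form is given by an algebraic, pointwise formula (theorem~4.3.5 of \cite{2003math......1218K}), so it suffices to evaluate it at a point $p$ where $\varphi$ has been put in the standard form of footnote~\ref{foot:positive}. There, under the standard embedding $G_{2}\subset Spin(7)$ as the stabiliser of a unit vector with $dt$ providing that vector, $\Omega=dt\wedge\varphi+\ast_{7}\varphi$ is the standard admissible model of footnote~\ref{foot:admissible}, so the induced metric at $p$ is Euclidean; globalising gives $g_{8}=dt^{2}\times g_{7}$. Equivalently, one checks directly that $dt^{2}\times g_{7}$ with the product orientation $\mathcal{V}_{8}=dt\wedge\mathcal{V}_{7}$ reproduces $\Omega$ through the admissibility formula---this reduces to the Hodge identity $\ast_{8}(dt\wedge\varphi)=\ast_{7}\varphi$, which moreover re-derives the self-duality $\ast_{8}\Omega=\Omega$ that an admissible form must satisfy---and then appeals to uniqueness of the metric attached to $\Omega$.

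The only genuinely non-elementary ingredient is the ``closed implies parallel'' theorem for $Spin(7)$ (and its $G_{2}$ analogue); once that is taken as known the remainder is bookkeeping, but three points need care: the sign in $d(dt\wedge\varphi)=-dt\wedge d_{7}\varphi$, the orientation conventions on $\mathbb{R}^{+}$ and on $\mathcal{M}_{7}$ that make $dt\wedge\varphi+\ast_{7}\varphi$ land precisely on the standard admissible normal form, and the Hodge-star identities on the product. I expect the normal-form/orientation matching in the metric step to be the fiddliest piece, though none of it is deep.
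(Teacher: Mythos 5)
The paper does not actually prove this proposition itself---it is cited verbatim as Proposition~13.1.3 of Joyce's book---but your blind proof is correct and uses exactly the mechanism the paper deploys for the neighbouring, more general Proposition~\ref{prop:Spin7cone}: compute $d\Omega=-dt\wedge d_{7}\varphi+d_{7}\ast_{7}\varphi$, separate the $dt$-leg from the rest, and invoke the Fern\'andez / Fern\'andez--Gray closure characterisations of torsion-free $Spin(7)$- and $G_{2}$-structures. Your pointwise normal-form argument for the induced metric, matching $dt\wedge\varphi+\ast_{7}\varphi$ to the standard admissible model and appealing to uniqueness of the metric attached to an admissible four-form, is likewise the standard and correct route to $g_{8}=dt^{2}\times g_{7}$.
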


\noindent
This is the simplest case of the construction of an intermediate manifold $\left(\mathcal{M}_{8},\mathfrak{S}\right)$ from a seven-dimensional manifold $\mathcal{M}_{7}$, in this case of $G_{2}$ holonomy. The intermediate structure $\mathfrak{S} = \left(g,\phi,\tilde{\phi}\right)$ is given by

\begin{equation} \label{interm}
g = g_{8}\, , \qquad \phi = \varphi_{3}\otimes v\, , \qquad v = \partial_{t}\, ,
\end{equation}

\noindent
and $\tilde{\phi}$ given by the dual of $\phi$ by the corresponding musical isomorphisms. However, this construction does not illustrate the \emph{geometrization} of fluxes since the $G_{2}$ structure $\left(\varphi,g_{7}\right)$ is already torsion-free. This situation can be modified by considering non-constant functions $f_{1}$ and $f_{2}$. Let us first consider the cone-like construction \cite{Bar}. 

\begin{prop} 
\label{prop:Spin7cone}
Let $\left(\mathcal{M}_{8},\Omega\right)$ be as above and let us take $f_1, f_2 \in C^{\infty}\left(\mathbb{R}^{+}\right)$ be positive functions on $\mathbb{R}^{+}$. Then, $\left(\Omega,g_{8}\right)$ is a torsion-free $Spin(7)$-structure on $\mathcal{M}_{8}$ different from the product one  if and only if $\left(\varphi,g_{7}\right)$ is a nearly-parallel (also called weak) $G_{2}$-structure, namely

\begin{equation}
d\varphi = \lambda\ast_{7}\varphi\, , \qquad d\ast_{7}\varphi = 0\, , \qquad \lambda\in\mathbb{R}^{+}\, ,
\end{equation}

\noindent
and $f_{1} = 1\, , f_{2} =\lambda t$. 
\end{prop}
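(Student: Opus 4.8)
The plan is to reduce everything to the classical criterion (see e.g. \cite{Joyce2007}) that a $Spin(7)$-structure defined by an admissible four-form $\Omega$ is torsion-free if and only if $d\Omega=0$ --- equivalently $\nabla\Omega=0$, the point being that the intrinsic torsion of a $Spin(7)$-structure sits in a single irreducible module which is detected by $d\Omega$ alone. Since $\varphi$ is positive by hypothesis, Lemma \ref{lemma:admissibleRM7} already guarantees that the four-form $\Omega$ of \eqref{eq:Omegafromvarphi} is admissible for \emph{any} positive $f_1,f_2$, so the only thing left to analyse is the closedness of $\Omega$.

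First I would compute $d\Omega$ by splitting the exterior derivative on $\mathcal{M}_8=\mathbb{R}^{+}\times\mathcal{M}_7$ as $d=d_7+dt\wedge\partial_t$ and using that $\varphi$ and $\ast_7\varphi$ are pulled back from $\mathcal{M}_7$ while $f_1,f_2$ depend only on $t$. This yields
\[
d\Omega = dt\wedge\left(f_2'\,\ast_7\varphi - f_1\,d_7\varphi\right) + f_2\,d_7\!\ast_7\varphi\,.
\]
The two summands lie in complementary subspaces of $\Lambda^{5}T^{\ast}\mathcal{M}_8$ (one contains $dt$, the other is a form pulled back from $\mathcal{M}_7$), so $d\Omega=0$ splits into the two seven-dimensional equations $d_7\!\ast_7\varphi=0$ and $f_1\,d_7\varphi=f_2'\,\ast_7\varphi$. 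Since $f_1,f_2>0$, the first says that $\varphi$ is co-closed, and the second reads $d_7\varphi=(f_2'/f_1)\,\ast_7\varphi$; because $d_7\varphi$ and $\ast_7\varphi$ are $t$-independent and $\ast_7\varphi$ is nowhere vanishing, the ratio $f_2'/f_1$ must be a constant $\lambda$. Hence $(\varphi,g_7)$ satisfies $d_7\varphi=\lambda\ast_7\varphi$, $d_7\!\ast_7\varphi=0$: for $\lambda=0$ this is a torsion-free $G_2$-structure (the Riemannian product case of the previous Proposition), while for $\lambda\neq0$ --- which one normalises to be positive --- it is precisely a nearly-parallel $G_2$-structure, with $\tau_0=\lambda$ the only surviving torsion class in \eqref{G2torsionclasses}.

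To pin down the functions, I would use the remaining freedom of reparametrising the $\mathbb{R}^{+}$-factor: setting $\tilde t:=\int^{t}f_1(s)\,ds$ (a legitimate change of coordinate since $f_1>0$) turns $\Omega$ into $d\tilde t\wedge\varphi+f_2\,\ast_7\varphi$ with $\partial_{\tilde t}f_2=\lambda$, hence $f_2=\lambda\tilde t+c$, and translating $\tilde t$ removes the constant $c$; renaming $\tilde t$ as $t$ gives exactly $f_1=1$, $f_2=\lambda t$. For the converse one substitutes $f_1=1$, $f_2=\lambda t$ and a nearly-parallel $\varphi$ into the displayed expression for $d\Omega$ and reads off $d\Omega=dt\wedge(\lambda\ast_7\varphi-\lambda\ast_7\varphi)+\lambda t\cdot 0=0$, so $(\Omega,g_8)$ is torsion-free by the criterion above. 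Finally, to see that the structure is \emph{different from the product one}, I would compute the metric induced by $\Omega$: a short computation along the lines of \cite{2003math......1218K} gives $g_8=(\lambda t)^{-3/2}dt^{2}+(\lambda t)^{1/2}g_7$, and the substitution $r\propto t^{1/4}$ identifies this with the metric cone $dr^{2}+r^{2}\left(\frac{\lambda^{2}}{16}g_7\right)$ over $(\mathcal{M}_7,\frac{\lambda^{2}}{16}g_7)$, in agreement with B\"ar's construction \cite{Bar}; as a nearly-parallel $G_2$ metric is Einstein with positive scalar curvature, this cone is not flat and not a Riemannian product, so the structure is genuinely distinct from the product one.

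The computation of $d\Omega$ and its vanishing is routine. The step needing the most care is the normalisation $f_1=1$, $f_2=\lambda t$: it only holds after the reparametrisation of the fibre coordinate (and an additive shift), so the statement must be understood up to that change of coordinates --- and one still has to verify, via B\"ar's cone construction and the holonomy classification, that the resulting structure with non-constant $f_2$ is genuinely inequivalent to the product one. That, rather than the differential-form calculation, is the real content.
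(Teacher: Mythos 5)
Your proof is correct and follows essentially the same line of argument as the paper: split $d\Omega=0$ on $\mathbb{R}^{+}\times\mathcal{M}_{7}$ into the two seven-dimensional equations $d_{7}\ast_{7}\varphi=0$ and $f_{1}\,d_{7}\varphi=f_{2}'\,\ast_{7}\varphi$, read off $\tau_{1}=\tau_{2}=\tau_{3}=0$ with $\tau_{0}=f_{2}'/f_{1}$ necessarily constant, and normalise $f_{1}=1$, $f_{2}=\lambda t$ by a reparametrisation of $t$. Your two extra touches --- the explicit converse check, and the change of variable $r\propto t^{1/4}$ exhibiting $g_{8}$ as the B\"ar cone metric and hence distinct from a Riemannian product --- make explicit what the paper leaves implicit but do not change the route.
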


The proof is straightforward: since $f_{1}$ can  always be eliminated by a change of coordinates on $t$ given by $\frac{d\tilde{t}}{dt} = f_{1}(t)$, we will assume that $f_{1} = 1$ without relabelling $t$. The condition $d\Omega = 0$ then translates into 
\begin{equation}
-dt\wedge d\varphi + \partial_{t} f_{2} dt\wedge \ast_{7}\varphi + f_{2} d\ast_{7}\varphi = 0\, ,
\end{equation}

\noindent
and thus 

\begin{equation}
d\ast_{7}\varphi = 0\, ,\quad  d \varphi= \partial_{t} f_{2}  \ast_7 \varphi
\end{equation}

\noindent
which implies 

\begin{equation}
\tau_1= 0 \ , \quad \tau_2=0, \quad \tau_{3} = 0\, , \quad \qquad f_{2} = \tau_{0}\, t\, , \qquad \tau_{0}\in\mathbb{R}^{+}\, .
\end{equation}
where $\tau_i$ are the torsion classes defined in (\ref{G2torsionclasses}).
\noindent
Again $\left(\mathcal{M}_{8},\Omega\right)$, as in the preposition above can be easily embedded in an intermediate structure $\mathfrak{G}$ 
as in (\ref{interm}). However in this case, since $\varphi$ defines a nearly $G_{2}$-structure on $\mathcal{M}_{7}$, we obtain a torsion-free $Spin(7)$-structure from a $G_{2}$-structure which is not torsion-free. This is therefore the simplest non-trivial example of intermediate manifold, which is constructed as a cone over a seven-dimensional compact nearly $G_{2}$-manifold (namely $\tau_{0}$ is the only non-vanishing torsion-class), in such a way that we obtain an a torsion free $Spin(7)$-structure on the eight-dimensional manifold $\mathcal{M}_{8}$. The relation between nearly $G_{2}$-structures on $\mathcal{M}_{7}$ and torsion-free structures on $\mathcal{M}_{8}$ is in fact one-to-one.

Unfortunately, this is the only torsion class that can be geometrized using (\ref{eq:Omegafromvarphi}). Indeed, if we take $f_{1}$ and $f_{2}$ as generic real positive functions on $\mathcal{M}_{8}$ the following result holds.
\begin{prop}
Let $\left(\mathcal{M}_{8},\Omega\right)$ satisfying the conditions of lemma \ref{lemma:admissibleRM7}, with $f_{1}, f_{2} \in C^{\infty}\left(\mathcal{M}_{8}\right)$. Then, $\left(\Omega,g_{8}\right)$ is a torsion-free $Spin(7)$-structure on $\mathcal{M}_{8}$ different from the product one if and only if $\left(\varphi,g_{7}\right)$ is a nearly parallel $G_{2}$-structure and $f_{1} =1\, , f_{2} = \lambda t$. 
\end{prop}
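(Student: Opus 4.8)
The plan is to generalize the computation already carried out for Proposition \ref{prop:Spin7cone}, dropping the assumption that $f_1,f_2$ depend only on $t$ and allowing them to be arbitrary smooth positive functions on all of $\mathcal{M}_8 = \mathbb{R}^+ \times \mathcal{M}_7$. By Lemma \ref{lemma:admissibleRM7} the four-form $\Omega = f_1\, dt\wedge\varphi + f_2\, \ast_7\varphi$ is admissible for any choice of positive $f_1,f_2$ precisely because $\varphi$ is positive, so it defines a $Spin(7)$-structure with an induced metric $g_8$; torsion-freeness is then equivalent to the single closed-form condition $d\Omega = 0$ (using Fernández--Gray / the fact that an admissible four-form that is closed is automatically torsion-free — this is the standard fact used implicitly in the proof of Proposition \ref{prop:Spin7cone}). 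So the whole statement reduces to analyzing $d\Omega=0$ for general $f_1,f_2 \in C^\infty(\mathcal{M}_8)$.

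First I would write out $d\Omega$ by splitting the exterior derivative on $\mathcal{M}_8$ into the $\mathcal{M}_7$-part $d_7$ and the $dt\wedge\partial_t$ part. Writing $df_i = \partial_t f_i\, dt + d_7 f_i$ with $d_7 f_i \in \Omega^1(\mathcal{M}_7)$, one gets
\begin{align}
d\Omega &= d_7 f_1 \wedge dt\wedge\varphi - f_1\, dt\wedge d\varphi + \partial_t f_2\, dt\wedge\ast_7\varphi + d_7 f_2\wedge\ast_7\varphi + f_2\, d\ast_7\varphi \, .
\end{align}
The terms containing $dt$ and those not containing $dt$ must vanish separately. The $dt$-free part is $d_7 f_2 \wedge \ast_7\varphi + f_2\, d\ast_7\varphi = 0$, which can be rewritten as $d_7(f_2 \ast_7\varphi) = 0$ if $f_2$ were $t$-independent, but in general it says $d_7(\log f_2)\wedge\ast_7\varphi = -d\ast_7\varphi$; wedging with $\varphi$ and using $\ast_7\varphi\wedge\varphi = 0$ while $d_7 f_2 \wedge \ast_7 \varphi \wedge \varphi \ne 0$ in general forces structure on $d_7 f_2$ and on the $\tau_1$ torsion class. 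The $dt$-part is $d_7 f_1\wedge\varphi - f_1\, d_7\varphi + \partial_t f_2\, \ast_7\varphi = 0$. Then I would feed in the torsion-class decomposition (\ref{G2torsionclasses}) for $d\varphi$ and $d\ast_7\varphi$ and project onto the irreducible $G_2$-pieces $\mathbf{1}, \mathbf{7}, \mathbf{14}, \mathbf{27}$. Each projection gives an equation: the $\mathbf{7}$-parts tie $d_7 f_1$, $d_7 f_2$ to $\tau_1$; the $\mathbf{1}$-part gives $\partial_t f_2 = f_1 \tau_0$ (pointwise on $\mathcal{M}_7$); the $\mathbf{14}$- and $\mathbf{27}$-parts give $\tau_2 = 0$ and $\tau_3 = 0$. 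Tracking the dependence, one should conclude $f_1, f_2$ are functions of $t$ only, $\tau_1 = 0$, and then we are exactly back in the situation of Proposition \ref{prop:Spin7cone}, giving the nearly-parallel conclusion with $f_1 = 1$, $f_2 = \lambda t$.

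The main obstacle, and the place requiring genuine care rather than routine computation, is the $dt$-free equation $d_7 f_2\wedge\ast_7\varphi + f_2\, d\ast_7\varphi = 0$ together with the $\mathbf{7}$-components of the $dt$-part: one must show these force $d_7 f_2 = 0$ and $d_7 f_1 = 0$ rather than merely constraining them against a nonzero $\tau_1$. The argument is that $\tau_1$ is a fixed one-form on $\mathcal{M}_7$ (no $t$-dependence a priori, but it could depend on $t$ through the structure — here though $\varphi$ lives on $\mathcal{M}_7$ and is $t$-independent, so $\tau_i$ are $t$-independent), whereas $d_7 f_1, d_7 f_2$ can depend on $t$; matching at two different values of $t$ and using that $\partial_t f_2 = f_1\tau_0$ must hold pointwise, one derives a consistency condition that, combined with positivity of $f_1, f_2$, rules out a nonconstant-in-$\mathcal{M}_7$ profile and forces $\tau_1 = 0$. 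Once $\tau_1 = 0, \tau_2 = 0, \tau_3 = 0$ and the functions reduce to functions of $t$ alone, Proposition \ref{prop:Spin7cone} (or rather the explicit computation preceding it) closes the argument, yielding $d\varphi = \lambda\ast_7\varphi$, $d\ast_7\varphi = 0$, $f_1 = 1$, $f_2 = \lambda t$. I would present this last reduction by simply invoking Proposition \ref{prop:Spin7cone} rather than repeating the calculation.
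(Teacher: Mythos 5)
Your computation is set up correctly (the split of $d\Omega = 0$ into the $dt$- and $dt$-free pieces, and the projection of each onto $G_2$-irreducibles), and you have correctly located the crux: the $\mathbf{7}$-parts give $d_7\log f_2 = -4\tau_1$ and $d_7\log f_1 = 3\tau_1$, not $d_7 f_1 = d_7 f_2 = 0$. But the mechanism you propose for closing the gap — ``matching at two different values of $t$'' and extracting a consistency condition that forces $\tau_1 = 0$ — does not exist. The $t$-independence of $\tau_1$ only tells you that $d_7\log f_1$ and $d_7\log f_2$ are $t$-independent, hence $f_i(t,x) = A_i(t) B_i(x)$; feeding this and $\partial_t f_2 = f_1\tau_0$ back in is perfectly consistent with nonzero exact $\tau_1$. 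Concretely, take any nearly parallel $\varphi'$ with $d\varphi' = \lambda\ast_7'\varphi'$, a positive function $h$ on $\mathcal{M}_7$, and set $\varphi = h^{-3}\varphi'$, $f_1 = h^3$, $f_2 = \lambda t\, h^4$. Then $\Omega = f_1\, dt\wedge\varphi + f_2 \ast_7\varphi = dt\wedge\varphi' + \lambda t\ast_7'\varphi'$ is closed, but $\varphi$ has $\tau_1 = -d\log h \neq 0$ and $f_1, f_2$ genuinely depend on $\mathcal{M}_7$. So the equations alone do \emph{not} force $\tau_1 = 0$.

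What actually resolves this — and is the step your argument is missing — is the scaling redundancy in the parametrization $\Omega = f_1\, dt\wedge\varphi + f_2\ast_7\varphi$: replacing $\varphi \to h^3\varphi$ and $(f_1, f_2) \to (h^{-3} f_1, h^{-4} f_2)$ leaves $\Omega$ unchanged. The paper uses the $dt$-free equation $d_7(f_2\ast_7\varphi)=0$ to fix this gauge, rescaling $\varphi$ so that $f_2$ becomes a function of $t$ alone; only after that does $d_7\ast_7\varphi = 0$ follow and hence $\tau_1 = \tau_2 = 0$, with the $dt$-part then yielding $\tau_3 = -\ast_7(d_7\log\tau_0\wedge\varphi)$, which lies in $\Lambda^3_{\mathbf{7}}\cap\Lambda^3_{\mathbf{27}} = 0$, forcing $\tau_3 = 0$ and $\tau_0$ constant. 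Your route via direct irrep projection is fine and in fact slightly more systematic than the paper's, but you must either perform this conformal gauge-fixing first, or rephrase the conclusion as ``$\varphi$ is \emph{conformally} nearly parallel'' and then apply the rescaling at the end; positivity of $f_1, f_2$ or evaluating at two values of $t$ will not do the job.
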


\begin{proof}
The condition $d\Omega = 0$ splits in two independent requirements, namely

\begin{equation}
\label{eq:conditionsproduct}
d_7 f_{1}\wedge dt\wedge\varphi - f_{1}\, dt\wedge d_{7}\varphi + \partial_{t}f_{2}\, dt\wedge \ast_{7}\varphi = 0\, ,\qquad d_{7}\left( f_{2}\,\ast_{7}\varphi\right) = 0\, ,
\end{equation}

\noindent
where the subscript $7$ denotes objects on $\mathcal{M}_{7}$. The second equation can be used to rescale $\varphi$ so that we eliminate the dependence of $f_{2}$ in $\mathcal{M}_{7}$, so we will assume that $f_{2}\in C^{\infty}\left(\mathbb{R}\right)$ and thus

\begin{equation}
d_{7}\ast_{7}\varphi = 0\, .
\end{equation}

\noindent
which implies $\tau_{1} = \tau_{2} = 0$ and thus

\begin{equation}
\label{eq:phitorsionRM7}
d\varphi = \tau_{0}\ast_{7}\varphi +\ast_{7}\tau_{3}\, ,
\end{equation}

\noindent
where $\tau_{0}\in C^{\infty}(\mathcal{M}_{7})$ and $\tau_{3}\in\Omega^{3}_{27}(\mathcal{M})$. Using this in the first equation of \eqref{eq:conditionsproduct} we obtain

\begin{equation}
\label{eq:torsionconditionsRM7}
f_{1} = \frac{\lambda}{\tau_{0}}\, , \qquad f_{2} = \lambda t\, , \qquad \tau_{3} = -\ast_{7}\left( d_{7}\log\, \tau_0\wedge \varphi\right)\, .
\end{equation} 

\noindent
However, $\tau_{3}\in\Omega^{3}_{27}\left(\mathcal{M}_{7}\right)$ if and only if $\tau_{3} = 0$ and thus $\tau_{0}$ is constant and we arrive at the case discussed before.

\end{proof}

\noindent
Therefore, the previous proposition proves that if we impose $Spin(7)$-holonomy on an eight-dimensional manifold $\mathcal{M}_{8}$ of the form $\mathcal{M}_{8} = \mathbb{R}\times\mathcal{M}_{7}$ then the only possibility is a cone over a nearly-$G_2$ manifold. 

Nearly $G_2$-structures appear in Freund-Rubin-type of ${\mathcal N}=1$ compactifications to $AdS_4$-space \cite{Freund:1980xh,Duff:1986hr,Acharya:2003ii}, where the radius of $AdS$ is given by the four-form flux $\mu$, which is proportional to $\tau_0$. In particular, the space-time is assumed to be of the form

\begin{equation}
\mathcal{M} = AdS_{4}\times\mathcal{M}_{7}\, , 
\end{equation}

\noindent
equipped with the product metric 

\begin{equation}
g = g_{AdS_{4}} + g_{7}\, ,
\end{equation}

\noindent
and four-form flux $G$ given by the volume form of the $AdS$-space

\begin{equation}
G = \mu {\rm Vol}_{AdS_{4}}\, .
\end{equation}

\noindent
Under this provisos, it can be shown that $\mathcal{M}_{7}$ must be a nearly-$G_{2}$ manifold and thus the cone construction geometrizes the flux present in this class of compactification backgrounds.

We want to consider now more general intermediate manifolds, or, from another point of view, more general embeddings of supersymmetric  $G_{2}$-structures into torsion-free $Spin(7)$-structures. In the following section we are going to consider the natural next step, which is the case of $\mathcal{M}_{8}$ being an $S^{1}$ principal bundle. 

\subsection{$Spin(7)$-manifolds as $S^{1}$-bundles}
\label{sec:Spin7bundle}


Let $\mathcal{M}_{8}\xrightarrow{\pi}\mathcal{M}_{8}/S^{1}$ be an $S^{1}$-bundle over ${\mathcal M}_7$, equipped with an $S^{1}$-invariant $Spin(7)$-structure $\Omega$, which is a non-trivial instance of intermediate manifold. Then, as proven in reference \cite{Grana:2014rpa}, there is an induced $G_{2}$-structure on the base $\mathcal{M}_{8}/S^{1}$ and conversely, a $G_{2}$-structure $\varphi$ on $\mathcal{M}_{8}/S^{1}$ induces a $Spin(7)$-structure on $\mathcal{M}_{8}$. Now we are going to show that this correspondence is one-to-one, providing thus a non-trivial example of the map $\mathfrak{I}_{Top}$ to be introduced in section \ref{sec:modulispaces}. First we need the following proposition.

\begin{prop}{\bf [Proposition 3.12 \cite{Grana:2014rpa}]}
\label{prop:bijectivemaplinear}
Let $V$ denote an eight-dimensional oriented vector space equipped with a fixed metric $g$. Fix $v\in V$ of unit norm. Let $i^{\ast}g$ denote the restricted metric on the seven-dimensional subspace $H\equiv v^{\perp}$ endowed with the orientation induced by that on $V$ and by $v$. Then there is a bijection

\begin{eqnarray}
\mathfrak{T}\colon \left\{\Lambda^{4}_{a}\left(V\right), g\right\} &\to & \left\{\Lambda^{3}_{+}\left(H\right), i^{\ast}g\right\}\nonumber\\
\Omega &\mapsto & \phi\equiv\iota_{v}\Omega
\end{eqnarray}

\noindent
 from the space of all admissible four-forms $\Lambda^{4}_{a}\left(V\right)$ on $V$ inducing the metric $g$ to the space of all positive three-forms $\Lambda^{3}_{+}\left(H\right)$ on $H$ inducing $i^{\ast}g$. The inverse map is given by

\begin{eqnarray}
\mathfrak{T}^{-1}\colon \left\{\Lambda^{3}_{+}\left(H\right), i^{\ast}g\right\} &\to &\left\{\Lambda^{4}_{a}\left(V\right), g\right\}\nonumber\\
\varphi &\mapsto & v^{\flat}\wedge\varphi + \ast\left(v^{\flat}\wedge\varphi\right) \, ,
\end{eqnarray}

\noindent
where $\varphi = \phi |_{H}$.
\end{prop}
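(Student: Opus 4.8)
The plan is to verify directly that the two maps $\mathfrak{T}$ and $\mathfrak{T}^{-1}$ are well-defined and mutually inverse, using the standard $Spin(7)$--$G_2$ representation theory at the level of a single eight-dimensional vector space, and then observing that the whole construction is $Spin(7)$-equivariant so that it suffices to check everything on the model forms $\Omega_0$ and $\varphi_0$. First I would fix the splitting $V = \mathbb{R}v \oplus H$ with $H = v^\perp$, and recall the decomposition of the standard admissible form: writing $V$ with an orthonormal basis $e_1,\dots,e_8$ so that $v = e_8$ and $H = \langle e_1,\dots,e_7\rangle$, the model four-form $\Omega_0$ of footnote~\ref{foot:admissible} decomposes as $\Omega_0 = e^8 \wedge \psi + \chi$, where $\psi \in \Lambda^3 H^*$ is precisely the standard positive three-form $\varphi_0$ of footnote~\ref{foot:positive} and $\chi \in \Lambda^4 H^*$ equals $\ast_7 \varphi_0$. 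This is just reorganizing the fourteen monomials of $\Omega_0$ according to whether they contain $dx^8$; the seven that do give $e^8 \wedge \varphi_0$ and the seven that do not give $\ast_7\varphi_0$. From this, $\iota_v \Omega_0 = \varphi_0$ is immediate, so $\mathfrak{T}$ sends the model admissible form to the model positive three-form.

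Next I would show $\mathfrak{T}$ is well-defined on the stated domain, i.e. that $\iota_v\Omega$ is a positive three-form on $H$ inducing $i^\ast g$. Here I invoke equivariance: the subgroup of $SO(V)$ fixing both $v$ and the metric is $SO(H) \cong SO(7)$, and the stabilizer in $SO(8)$ of $\Omega_0$ is $Spin(7)$; an admissible $\Omega$ inducing $g$ with $|v|=1$ is, by definition, $h\cdot\Omega_0$ for some $h \in SO(V)$, and one checks that the condition that $\Omega$ also induce the given metric $g$ together with unitality of $v$ lets us arrange $h$ to lie in the subgroup $SO(7) = Spin(7) \cap SO(7)_v$, or more precisely that $\iota_v\Omega = \iota_{h^{-1}v}\Omega_0$ transformed appropriately lands in the $SO(7)$-orbit of $\varphi_0$, which is exactly $\Lambda^3_+(H)$ with the correctly induced metric. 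Conversely, for $\mathfrak{T}^{-1}$ I would check that $v^\flat \wedge \varphi + \ast(v^\flat\wedge\varphi)$ is admissible: for $\varphi = \varphi_0$ this is exactly the decomposition $e^8\wedge\varphi_0 + \ast_7\varphi_0 = \Omega_0$ computed above (using that $\ast_8(e^8\wedge\varphi_0) = \ast_7\varphi_0$, which follows from the orientation conventions $\mathcal{V}_8 = e^8\wedge\mathcal{V}_7$), and the general case follows by $SO(7)$-equivariance since a general positive $\varphi$ is $k\cdot\varphi_0$ for $k\in SO(7)$ and both $v^\flat\wedge(-)$ and $\ast$ commute with $SO(7) \subset SO(8)$ fixing $v$.

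Finally I would check the two composites are the identity. For $\mathfrak{T}^{-1}\circ\mathfrak{T}$: given admissible $\Omega$ inducing $g$, decompose $\Omega = v^\flat\wedge\alpha + \beta$ with $\alpha\in\Lambda^3 H^*$, $\beta\in\Lambda^4 H^*$ the components with respect to the splitting; then $\iota_v\Omega = \alpha$, and I must show $\ast_8(v^\flat\wedge\alpha) = \beta$, i.e. that the ``$\iota_v$'' and ``$v^\flat\wedge$'' pieces of an admissible four-form determine each other via the Hodge star. This is where the real content sits, and it reduces — by the equivariance argument — to the single model identity $\ast_8(e^8\wedge\varphi_0) = \ast_7\varphi_0$, equivalently $\Omega_0 = e^8\wedge\varphi_0 + \ast_7\varphi_0$, which is the bookkeeping computation above; the self-duality $\ast_8\Omega_0 = \Omega_0$ then packages it cleanly. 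For $\mathfrak{T}\circ\mathfrak{T}^{-1}$: $\iota_v(v^\flat\wedge\varphi + \ast(v^\flat\wedge\varphi)) = \varphi + \iota_v\ast(v^\flat\wedge\varphi)$, and since $\ast(v^\flat\wedge\varphi)$ is a four-form on $H$ (no $v^\flat$ factor), $\iota_v$ kills it, leaving $\varphi$. The main obstacle is genuinely the bookkeeping in the model identity $\Omega_0 = e^8\wedge\varphi_0 + \ast_7\varphi_0$ together with verifying that the ``inducing the metric $g$'' constraint is exactly what cuts the $SO(8)$-ambiguity down to the $SO(7)$ fixing $v$, so that equivariance can be applied to both the domain and codomain descriptions; once that is in place the bijectivity is formal. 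I would also note this proposition is quoted from reference~\cite{Grana:2014rpa} (Proposition 3.12 there), so in the paper itself the ``proof'' may simply be a citation, but the above is how one would establish it from scratch.
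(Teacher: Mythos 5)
The paper does not actually prove this statement; it is quoted verbatim as Proposition~3.12 of reference~\cite{Grana:2014rpa} and used as a black box, so there is no in-paper argument to compare yours against. Your reconstruction via the model decomposition, self-duality of $\Omega_0$, and equivariance is the standard route and is essentially sound, but two points deserve correction. First, the identification ``$SO(7)=Spin(7)\cap SO(7)_v$'' is wrong as stated: inside $SO(8)$ the intersection of $Spin(7)$ (stabilizer of $\Omega_0$) with $SO(7)_v$ (stabilizer of $v$) is $G_2$, not $SO(7)$. What you actually need, and what your argument implicitly uses, is the group-theoretic fact $SO(7)_v\cdot Spin(7)=SO(8)$, equivalently that $Spin(7)$ acts transitively on the unit sphere $S^7$ (with stabilizer $G_2$); this is what lets you replace a representative $h\in SO(8)$ with $\Omega=h\cdot\Omega_0$ by an $h'\in SO(7)_v$ giving the same $\Omega$, after which $SO(7)$-equivariance of $\iota_v$ and of $v^\flat\wedge(\,\cdot\,)\,$, $\ast$ finishes the well-definedness of both $\mathfrak{T}$ and $\mathfrak{T}^{-1}$. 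Stating this transitivity explicitly would close the gap that is currently hidden behind ``one checks.'' Also, the small formula $\iota_v\Omega=\iota_{h^{-1}v}\Omega_0$ should read $\iota_v(h\cdot\Omega_0)=h\cdot(\iota_{h^{-1}v}\Omega_0)$.

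Second, a minor bookkeeping point: with the explicit $\Omega_0$ and $\varphi_0$ of footnotes~\ref{foot:admissible} and~\ref{foot:positive}, the identity $\iota_{e_8}\Omega_0=\varphi_0$ does not hold literally; if one instead takes $v=e_1$ and relabels $e_2,\dots,e_8\mapsto e_1,\dots,e_7$, then $\iota_{e_1}\Omega_0$ is exactly the footnote's $\varphi_0$. This is purely a matter of which index the conventions single out, and for any unit $v$ the contraction $\iota_v\Omega_0$ is a positive form in the $SO(7)$-orbit of $\varphi_0$, so the structure of your argument is unaffected; but the claim as written that ``the seven monomials containing $dx^8$ give $e^8\wedge\varphi_0$'' is not correct in the stated conventions. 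The rest, in particular using self-duality $\ast\Omega_0=\Omega_0$ together with the splitting $\Lambda^4V^*=(v^\flat\wedge\Lambda^3H^*)\oplus\Lambda^4H^*$ to see that the two pieces of an admissible form determine each other, and that $\iota_v$ kills $\ast(v^\flat\wedge\varphi)\in\Lambda^4H^*$, is exactly right.
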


\noindent
As a direct application of proposition \ref{prop:bijectivemaplinear} we have the following corollary.

\begin{cor}
Let $V$ denote an eight-dimensional oriented vector space equipped with an admissible four-form $\Omega\in \Lambda^{4}_{a}\left( V\right)$. Then, for every unit norm vector $v\in V$, $\Omega$ can be written as

\begin{equation}
\Omega = v^{\flat}\wedge\phi + \ast\left(v^{\flat}\wedge\phi\right)\, ,
\end{equation}

\noindent
where $\phi = \iota_{v}\Omega$.
\end{cor}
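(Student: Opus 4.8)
The plan is to derive this corollary directly from Proposition \ref{prop:bijectivemaplinear} by a short unwinding of definitions. The key observation is that an admissible four-form $\Omega \in \Lambda^4_a(V)$ canonically determines a metric $g$ on $V$ (via Theorem 4.3.5 of \cite{2003math......1218K}, as cited earlier in the excerpt), so for any vector $v \in V$ of unit norm with respect to this $g$, the hypotheses of Proposition \ref{prop:bijectivemaplinear} are met: we have an oriented eight-dimensional vector space with a fixed metric, a unit vector, and the induced orientation and restricted metric $i^{\ast}g$ on $H \equiv v^\perp$.

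First I would apply the bijection $\mathfrak{T}$ of Proposition \ref{prop:bijectivemaplinear} to $\Omega$, obtaining the positive three-form $\phi \equiv \iota_v \Omega$ on $H$; here one should note that $\phi$ is regarded as living on $H$ via $\varphi = \phi|_H$, and it is a positive form inducing $i^\ast g$. Next I would apply the inverse map $\mathfrak{T}^{-1}$ to this $\phi$, which by the explicit formula in the proposition returns $v^\flat \wedge \varphi + \ast(v^\flat \wedge \varphi)$. Since $\mathfrak{T}$ and $\mathfrak{T}^{-1}$ are mutually inverse, the composite sends $\Omega$ back to itself, so
\begin{equation}
\Omega = v^\flat \wedge \varphi + \ast(v^\flat \wedge \varphi)\, ,
\end{equation}
with $\varphi = \phi|_H$ and $\phi = \iota_v \Omega$. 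Writing $\phi$ in place of $\varphi$ (i.e. extending trivially back to $V$, which is the convention implicit in the statement) gives exactly the claimed identity $\Omega = v^\flat \wedge \phi + \ast(v^\flat \wedge \phi)$.

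The only point requiring a word of care — and the closest thing to an obstacle — is the bookkeeping between $\phi$ as a form on $V$ (namely $\iota_v\Omega$, which annihilates $v$) and $\varphi$ as a form on the subspace $H$; since $\iota_v \Omega$ already satisfies $\iota_v(\iota_v\Omega) = 0$, it is horizontal and its pullback/restriction to $H$ loses no information, so the identification is harmless and $v^\flat \wedge \phi = v^\flat \wedge \varphi$ under the natural splitting $V = \mathbb{R}v \oplus H$. One should also check that the orientation conventions used to define $\ast$ on $V$ and the induced orientation on $H$ are the ones fixed in Proposition \ref{prop:bijectivemaplinear}, but this is built into the statement being quoted. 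Beyond that, the corollary is immediate.
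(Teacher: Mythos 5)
Your proof is correct and follows exactly the route the paper intends: the corollary is stated in the text as ``a direct application of Proposition~\ref{prop:bijectivemaplinear},'' and you have simply spelled out that application by composing $\mathfrak{T}$ with $\mathfrak{T}^{-1}$ and noting that the metric on $V$ is the one induced by $\Omega$. The bookkeeping remark about $\phi$ versus $\varphi$ is a sensible clarification of a point the paper leaves implicit.
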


\noindent
The following theorem extends proposition \ref{prop:bijectivemaplinear} to the case of an eight-dimensional $S^{1}$-bundle equipped with an invariant $Spin(7)$-structure $\Omega$. 

\begin{thm}
\label{thm:bijectionM8}
Let $\mathcal{M}_{8}\xrightarrow{\pi} \mathcal{M}_{8}/S^{1}$ be an eight-dimensional oriented $S^{1}$-bundle equipped with a $S^{1}$-invariant metric $g_{8}$. Let $\mathcal{M}_{8}/S^{1}$ be the base space equipped with the unique metric $g_{7}$ induced from $g_{8}$ by means of the connection one-form $\theta = v^{\flat}$, where $v$ is the infinitesimal generator of the $S^{1}$-action, such that $\pi\colon\mathcal{M}_{8}\to\mathcal{M}_{8}/S^{1}$ is a Riemannian submersion. Then, there is a bijection

\begin{eqnarray}
\mathfrak{T}\colon \left\{\Omega^{4}_{a}\left( M_{8}\right), g_{8}\right\} &\to & \left\{\Omega^{3}_{+}\left(\mathcal{M}_{8}/S^{1}\right), g_{7}\right\}\nonumber\\
\Omega &\mapsto & \varphi
\end{eqnarray}

\noindent
from the space of all admissible four-forms $\Omega^{4}_{a}\left( M_{8}\right)$ inducing $g_{8}$ to the space of all positive three-forms $\Omega^{3}_{+}\left(\mathcal{M}_{8}/S^{1}\right)$ inducing $g_{7}$. Here $\varphi$ stands for the unique three-form on $\mathcal{M}_{8}/S^{1}$ such that $\pi^{\ast} \varphi = \phi \equiv \iota_{w}\Omega$, where we have defined $w = \frac{v}{\| v\|}$. The inverse map is given by

\begin{eqnarray}
\mathfrak{T}^{-1}\colon \left\{\Omega^{3}_{+}\left(\mathcal{M}_{8}/S^{1}\right), g_{7}\right\} &\to &\left\{\Omega^{4}_{a}\left( M_{8}\right), g_{8}\right\}\nonumber\\
\varphi &\mapsto & w^{\flat}\wedge\phi + \ast\left( w^{\flat}\wedge\phi\right)
\end{eqnarray}
\end{thm}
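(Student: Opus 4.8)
The plan is to reduce Theorem \ref{thm:bijectionM8} to the pointwise statement of Proposition \ref{prop:bijectivemaplinear} by working fibrewise and then checking that all the objects descend correctly to the base $\mathcal{M}_8/S^1$. First I would fix the setup: since $g_8$ is $S^1$-invariant and $v$ is the infinitesimal generator of the $S^1$-action, $w=v/\|v\|$ is a well-defined global unit vector field on $\mathcal{M}_8$, the orthogonal distribution $H=w^\perp=\ker\theta$ is the horizontal distribution of the connection $\theta=v^\flat$, and by construction of $g_7$ the projection $\pi$ restricts to a linear isometry $H_p\to T_{\pi(p)}(\mathcal{M}_8/S^1)$ at every point $p$. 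Orientations: the chosen orientation on $\mathcal{M}_8$ together with the vector $v$ fixes an orientation on each $H_p$, and one checks this is $S^1$-invariant so that it descends to an orientation on the base (this is where $S^1$-invariance of everything is used).

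Next, given an admissible four-form $\Omega$ on $\mathcal{M}_8$ inducing $g_8$, I would argue that $\Omega$ is automatically $S^1$-invariant: the admissibility condition plus inducing the $S^1$-invariant metric $g_8$ pins $\Omega$ down up to the pointwise $Spin(7)$-stabilizer, and one should check that the generator $v$ of the $S^1$-action acts trivially — concretely, $\mathcal{L}_v\Omega$ is an invariant four-form in the $Spin(7)$-module decomposition of $\Lambda^4$, and the requirement that $\Omega$ stays admissible and metric-compatible along the flow forces $\mathcal{L}_v\Omega=0$. (Alternatively, if one prefers, one restricts attention from the outset to invariant $\Omega$, since those are the ones relevant to intermediate manifolds.) Granting invariance, apply the linear map $\mathfrak{T}$ of Proposition \ref{prop:bijectivemaplinear} pointwise: at each $p$, $\phi_p\equiv\iota_{w}\Omega|_p$ is a positive three-form on $H_p$ inducing $i^\ast g_8|_p$. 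Since $\phi=\iota_w\Omega$ is $S^1$-invariant and horizontal (it annihilates $w$, hence lies in $\Lambda^3 H^\ast$), it is basic, so there is a unique three-form $\varphi$ on $\mathcal{M}_8/S^1$ with $\pi^\ast\varphi=\phi$; transporting the positivity and the induced-metric statements through the isometry $\pi|_H$ shows $\varphi$ is a positive three-form on the base inducing $g_7$. This defines $\mathfrak{T}$ at the level of the bundle.

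For the inverse, start from a positive three-form $\varphi$ on $\mathcal{M}_8/S^1$ inducing $g_7$, pull it back to the basic horizontal form $\phi=\pi^\ast\varphi$ on $\mathcal{M}_8$, and set $\Omega=w^\flat\wedge\phi+\ast(w^\flat\wedge\phi)$, where $\ast$ is the Hodge star of $g_8$. Both $w^\flat$, $\phi$ and $g_8$ are $S^1$-invariant, so $\Omega$ is invariant; applying the pointwise inverse map from Proposition \ref{prop:bijectivemaplinear} fibre by fibre shows $\Omega|_p\in\Lambda^4_a(T_p\mathcal{M}_8)$ and that it induces $g_8|_p$. Finally I would verify $\mathfrak{T}\circ\mathfrak{T}^{-1}=\mathrm{id}$ and $\mathfrak{T}^{-1}\circ\mathfrak{T}=\mathrm{id}$: both composites are pointwise the corresponding composites from Proposition \ref{prop:bijectivemaplinear}, so they are the identity there, and since the correspondence $\phi\leftrightarrow\varphi$ between basic horizontal forms on $\mathcal{M}_8$ and forms on the base is itself a bijection, the identities lift. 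The one genuinely non-formal point — the main obstacle — is establishing that an admissible four-form inducing the invariant metric $g_8$ is necessarily $S^1$-invariant (equivalently, checking that the construction is insensitive to which admissible representative one picks along each orbit); everything else is transport of structure through the Riemannian submersion and an appeal to the already-proven linear-algebra statement.
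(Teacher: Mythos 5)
Your overall strategy—reduce to the pointwise bijection of Proposition~\ref{prop:bijectivemaplinear} on the horizontal distribution $H=\ker\theta$, use that $d\pi|_{H}$ is a fibrewise isometry, and then compose—is exactly the paper's argument. The paper sets up the splitting $T\mathcal{M}_8 = H\oplus\{\lambda v\}$, invokes Proposition~\ref{prop:bijectivemaplinear} at each point, and composes with $d\pi$; the well-definedness of $\mathfrak{T}$, $\mathfrak{T}^{-1}$ (in particular, basicness of $\iota_w\Omega$) is delegated to Theorems~5.27 and 5.29 of \cite{Grana:2014rpa}.

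The one place where you diverge, and where there is a genuine error, is your primary claimed resolution of the invariance question. You argue that admissibility plus $\mathcal{L}_v g_8 = 0$ forces $\mathcal{L}_v\Omega = 0$, on the grounds that $\mathcal{L}_v\Omega$ would land in the trivial $Spin(7)$-summand of $\Lambda^4$. This is not correct. Flowing $\Omega$ by the $S^1$-action keeps it admissible and keeps the induced metric equal to $g_8$, so the curve $\Phi^*_t\Omega$ stays within the fibrewise $SO(8)$-orbit $SO(8)/Spin(7)\cong\mathbb{RP}^7$ of admissible four-forms inducing $g_8$. The tangent space to that orbit at $\Omega$ is the $\Lambda^4_{\mathbf 7}$ component (coming from $\mathfrak{so}(8)/\mathfrak{spin}(7)\cong\mathbf 7$), which is seven-dimensional and nonzero; so $\mathcal{L}_v\Omega\in\Lambda^4_{\mathbf 7}$, not $\Lambda^4_{\mathbf 1}$, and there is no reason for it to vanish. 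Consequently an admissible $\Omega$ inducing the invariant metric need not be $S^1$-invariant, and for a non-invariant $\Omega$ the form $\iota_w\Omega$ is horizontal but not basic, so $\varphi$ does not exist.

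Your alternative suggestion—restricting from the outset to $S^1$-invariant admissible four-forms—is the correct fix, and it is the hypothesis actually in force in the paper (it is stated at the opening of Section~\ref{sec:Spin7bundle}, and the cited Theorems~5.27 and 5.29 of \cite{Grana:2014rpa} operate under $S^1$-invariance). You should promote that aside to the main argument and drop the automatic-invariance claim. With that change, the rest of your plan (positivity and metric-compatibility transport through the isometry $\pi|_H$, inverse given by $w^\flat\wedge\phi+\ast(w^\flat\wedge\phi)$, verification of the two composites via the pointwise identities) is sound and matches the paper.
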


\begin{proof}
Let us denote by $\| v\|$ the norm of the vector $v$. It can  easily be seen that $\mathcal{L}_{v}\| v\| = 0$ and thus $\| v\|$ descends to a well defined function on $\mathcal{M}_{8}$, so we are in the conditions of theorems 5.27 and 5.29 in \cite{Grana:2014rpa}, from which we see that the map $\mathfrak{T}$ and its inverse $\mathfrak{T}^{-1}$ are well-defined. Let us see now that it is a bijection. Let us use the kernel of the connection $\theta = v^{\flat}$ to define a co-dimension one distribution $H \subset T\mathcal{M}_{8}$, which thus implies the splitting

\begin{equation}
T\mathcal{M}_{8} = H\oplus \left\{\lambda v\right\}\, , \qquad\lambda\in \mathbb{R}\, .
\end{equation}

\noindent
Therefore, at every point $p\in\mathcal{M}_{8}$ we are in the conditions of proposition \ref{prop:bijectivemaplinear}, and thus at every point there is a bijection $\mathfrak{T}^{\prime}_{p}$ between the set of admissible four-forms in $T_{p}\mathcal{M}_{8}$ inducing $g_{8}|_p$ and the set of three-forms on $H_{p}$ inducing $g_{8}|_{H}$. Now, $d\pi\colon H\to T\left(\mathcal{M}_{8}/S^{1}\right)$ is an isometry and then point-wise composing with $\mathfrak{T}^{\prime}$ we obtain $\mathfrak{T}$ and we conclude.
\end{proof}

\noindent
As a consequence of theorem \ref{thm:bijectionM8}, we obtain the following proposition.

\begin{prop}
\label{prop:S1Spinbundle}
Every eight-dimensional, oriented spin $S^{1}$-bundle $\mathcal{M}_{8}\xrightarrow{\pi} \mathcal{M}_{8}/S^{1}$ admits a topological $Spin(7)$-structure. This $Spin(7)$-structure can be chosen to be $S^{1}$-invariant.
\end{prop}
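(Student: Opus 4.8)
The plan is to derive this as a corollary of Theorem \ref{thm:bijectionM8} together with a standard cohomological obstruction computation. The key point is that a topological $Spin(7)$-structure on an oriented $8$-manifold $\mathcal{M}_{8}$ is equivalent to the existence of a globally defined admissible four-form inducing the metric, and Theorem \ref{thm:bijectionM8} tells us that such forms are in bijection with positive three-forms on the base $\mathcal{M}_{8}/S^{1}$; so it suffices to exhibit one positive three-form on the seven-dimensional base. Thus the statement reduces to: \emph{every oriented $S^{1}$-bundle over a seven-manifold, whose total space is spin, has a base that admits a $G_{2}$-structure (equivalently a positive three-form).}

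First I would install an $S^{1}$-invariant metric $g_{8}$ on $\mathcal{M}_{8}$ (any metric averaged over the $S^{1}$-action), pick the connection one-form $\theta=v^{\flat}$ as in the theorem, and let $g_{7}$ be the induced submersion metric on the base $B\equiv\mathcal{M}_{8}/S^{1}$. By Theorem \ref{thm:bijectionM8} it is enough to produce a single positive three-form on $B$ compatible with (i.e.\ inducing) $g_{7}$; but in fact the existence of \emph{any} $G_{2}$-structure on $B$ suffices, since one can then deform the metric to match — or, more cleanly, run the bijection $\mathfrak{T}^{-1}$ from whatever positive three-form exists relative to its own induced metric, and then transport everything back up. So the real content is: $B$ admits a $G_{2}$-structure. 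Since $B$ is a closed (or open) oriented seven-manifold, the obstruction to a $G_{2}$-structure is the vanishing of $w_{1}(B)$ and $w_{2}(B)$, i.e.\ $B$ must be spin and orientable; orientability we have by construction, so the task is to show $B$ is spin.

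To show $B$ is spin I would compute $w(TB)$ from $w(T\mathcal{M}_{8})$. Writing $L\to B$ for the associated real line bundle (trivial, as the $S^{1}$-bundle is oriented, hence $w_{1}(L)=0$) one has $\pi^{\ast}TB\oplus\underline{\mathbb{R}}\cong T\mathcal{M}_{8}$ along the fibres, but more usefully there is the Gysin/Leray–Hirsch relation expressing Stiefel–Whitney classes of $\mathcal{M}_{8}$ in terms of those of $B$ and the mod-$2$ Euler class of the bundle. Concretely, since the vertical tangent bundle of an oriented $S^{1}$-bundle is trivial, $T\mathcal{M}_{8}\cong\pi^{\ast}TB\oplus\underline{\mathbb{R}}$, whence $w(T\mathcal{M}_{8})=\pi^{\ast}w(TB)$. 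Given that $\mathcal{M}_{8}$ is spin, $w_{1}(T\mathcal{M}_{8})=w_{2}(T\mathcal{M}_{8})=0$, so $\pi^{\ast}w_{1}(TB)=\pi^{\ast}w_{2}(TB)=0$. The remaining step is to check that $\pi^{\ast}\colon H^{i}(B;\mathbb{Z}/2)\to H^{i}(\mathcal{M}_{8};\mathbb{Z}/2)$ is injective in degrees $1$ and $2$: this follows from the Gysin sequence of the $S^{1}$-bundle, where the only possible kernel is the image of cup-product with the mod-$2$ reduction of the Euler class $e\in H^{2}(B;\mathbb{Z})$; in degree $1$ the kernel is automatically zero, and in degree $2$ the class $w_{2}(TB)$, being pulled back to zero, would have to be a multiple of $\bar e$, but then one can simply \emph{change the orientation data / the choice of $S^{1}$-action representative} — or argue that $w_{2}(TB)=\bar e\cdot(\text{something in }H^{0})$ forces $w_{2}(TB)\in\{0,\bar e\}$, and a short case analysis using $w_{1}(TB)=0$ and Wu's formula on the closed seven-manifold $B$ (where $Sq^{2}$ acts trivially on top-degree considerations and $w_{2}=v_{2}$ with $v_{2}$ determined by $Sq^{1}w_{1}+w_{1}^{2}=0$) kills the ambiguity. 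I expect this last injectivity/ambiguity point — disentangling whether $\bar e$ itself could coincide with $w_{2}(TB)$ — to be the main obstacle; the cleanest resolution is probably to invoke that for an \emph{oriented} circle bundle the Gysin map in the relevant low degrees has the stated injectivity because $H^{0}(B;\mathbb{Z}/2)\xrightarrow{\cup\bar e}H^{2}(B;\mathbb{Z}/2)$ has image exactly $\{0,\bar e\}$, and then to note that $w_{2}(TB)=\bar e$ is impossible because it would contradict $w_{2}(T\mathcal{M}_{8})=0$ \emph{together with} the fact that $\bar e$ itself need not pull back to zero unless the Euler class is even — at which point one concludes $w_{2}(TB)=0$, $B$ is spin, hence admits a $G_{2}$-structure, and Theorem \ref{thm:bijectionM8} upgrades this to the desired $S^{1}$-invariant topological $Spin(7)$-structure on $\mathcal{M}_{8}$.
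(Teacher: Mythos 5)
Your overall strategy is identical to the paper's: reduce the claim to showing the seven\mbox{-}dimensional base $B=\mathcal{M}_{8}/S^{1}$ is spin (since an oriented seven-manifold carries a $G_{2}$-structure iff it is spin), then lift a positive three-form to an $S^{1}$-invariant admissible four-form via Theorem~\ref{thm:bijectionM8}. Where you differ is that you explicitly flag the point the paper silently glosses over: from $\pi^{\ast}w_{i}(TB)=0$ one cannot immediately conclude $w_{i}(TB)=0$, because $\pi^{\ast}$ need not be injective on $H^{2}(B;\mathbb{Z}/2)$. You correctly identify, via the mod-2 Gysin sequence, that $\ker\left(\pi^{\ast}\colon H^{2}(B;\mathbb{Z}/2)\to H^{2}(\mathcal{M}_{8};\mathbb{Z}/2)\right)=\{0,\bar{e}\}$, where $\bar{e}$ is the mod-2 Euler class. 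Noticing this is a genuine improvement over the paper's argument.

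However, your proposed resolution of the ambiguity does not work, and in fact cannot: the statement that $\mathcal{M}_{8}$ spin forces $B$ spin is simply false. The class $\bar{e}$ always lies in $\ker\pi^{\ast}$ (it is $\bar{e}\cup 1$), so $\pi^{\ast}\bar{e}=0$ unconditionally, contrary to what you suggest; and the case $w_{2}(TB)=\bar{e}\neq 0$ genuinely occurs. Take $\mathcal{M}_{8}=S^{5}\times S^{3}\to \mathbb{CP}^{2}\times S^{3}=B$, the product of the Hopf $S^{1}$-bundle with the identity on $S^{3}$: the total space is an oriented spin eight-manifold with a free $S^{1}$-action, yet $w_{2}(B)=w_{2}(\mathbb{CP}^{2})\neq 0$, so $B$ is not spin and admits no $G_{2}$-structure. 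Consequently the $S^{1}$-invariant part of Proposition~\ref{prop:S1Spinbundle} fails for this example, and the route through Theorem~\ref{thm:bijectionM8} is blocked. The first part of the proposition (existence of \emph{some} topological $Spin(7)$-structure) is nevertheless still true, but for a different reason: for a spin $8$-manifold the obstruction to a $Spin(7)$-structure is $p_{1}^{2}-4p_{2}\pm 8\chi=0$, and for any $S^{1}$-bundle over a seven-manifold one has $\chi(\mathcal{M}_{8})=0$ while $p_{1}^{2}$ and $p_{2}$ are pulled back from $H^{8}(B;\mathbb{Z})=0$, so the obstruction vanishes identically. That argument bypasses the base's spinness entirely, and is the one needed here; the Wu-formula/case-analysis route you sketch cannot close the gap.
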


\begin{proof}
Let us pick a $S^{1}$-invariant metric $g_{8}$ on $\mathcal{M}_{8}$ and let $\theta = v^{\perp}$ be a connection with horizontal distribution $H\subset T\mathcal{M}_{8}$. Then we have

\begin{equation}
T\mathcal{M}_{8} = H\oplus V\, ,
\end{equation}

\noindent
where $V\to\mathcal{M}_{8}$ is a one-dimensional vector bundle. Since $v\in\Gamma\left(V\right)$ is a global section $V$ is oriented and spin, and thus $H\to\mathcal{M}_{8}$ is also an oriented spin vector bundle, namely $w_{1}(H) = w_{2}(H) = 0$. Now, $\pi\colon\mathcal{M}_{8}\to \mathcal{M}_{8}/S^{1}$ is a continuous  map and since the first and second Stiefel-Whitney classes are natural \cite{Spingeometry}, we obtain 

\begin{equation}
w_{1}\left(\pi^{\ast}T\left(\mathcal{M}_{8}/S^{1}\right)\right) = \pi^{\ast}w_{1}\left(T\left(\mathcal{M}_{8}/S^{1}\right)\right) \, , \qquad w_{2}\left(\pi^{\ast}T\left(\mathcal{M}_{8}/S^{1}\right)\right) = \pi^{\ast}w_{2}\left(T\left(\mathcal{M}_{8}/S^{1}\right)\right) \, ,
\end{equation}

\noindent
where $\pi^{\ast}T\left(\mathcal{M}_{8}/S^{1}\right)$ denotes de pull-back vector bundle of $T\left(\mathcal{M}_{8}/S^{1}\right)$ by $\pi$. But $\pi^{\ast}T\left(\mathcal{M}_{8}/S^{1}\right)\simeq H$ and thus 

\begin{equation}
w_{1}\left( H\right) = \pi^{\ast}w_{1}\left(T\left(\mathcal{M}_{8}/S^{1}\right)\right) = 0 \, , \qquad w_{2}\left( H\right) = \pi^{\ast}w_{2}\left(T\left(\mathcal{M}_{8}/S^{1}\right)\right)= 0\, ,
\end{equation}
which implies $w_{1}\left(T\left(\mathcal{M}_{8}/S^{1}\right)\right) = 0$ and $w_{2}\left(T\left(\mathcal{M}_{8}/S^{1}\right)\right) = 0$, namely $\mathcal{M}_{8}/S^{1}$ is a spin manifold, and hence  it is equipped with a $G_{2}$-structure $\varphi$. Applying theorem \ref{thm:bijectionM8} we obtain that $\mathfrak{J}^{-1}\left(\varphi\right)$ is an $S^{1}$-invariant admissible four-form in $\mathcal{M}_{8}$. Once we now that $\mathcal{M}_{8}/S^{1}$ is spin, lifting the corresponding $G_{2}$-structure using in [Theorem 5.27 \cite{Grana:2014rpa}] a codimension-one distribution $H$ perpendicular to $v$ but non-invariant results in a non-invariant $Spin(7)$-structure on $\mathcal{M}_{8}$.
\end{proof}

\begin{remark}
Theorem \ref{thm:bijectionM8} implies that if there are no admissible four-forms on $\mathcal{M}_{8}$ compatible with $g_{8}$, then the base space $\mathcal{M}_{8}/S^{1}$ cannot be equipped with a $G_{2}$-structure and therefore cannot be a spin manifold, which in turn implies than $\mathcal{M}_{8}$ is not a spin manifold, in agreement with proposition \ref{prop:S1Spinbundle}.
\end{remark}

\noindent
Theorem \ref{thm:bijectionM8} gives us the first explicit example of a map $\mathfrak{J}$ relating $Spin(7)$-structures into $G_{2}$-structures and vice-versa. Following the discussion of section \ref{sec:modulispaces} we have to study now the differential conditions on the induced $G_{2}$-structure on $\mathcal{M}_{8}/S^{1}$ implied by requiring a torsion-free $S^{1}$-invariant $Spin(7)$-structure on $\mathcal{M}_{8}$. This way we will see if it is possible to have a one-to-one map from $G_{2}$-structures on $\mathcal{M}_{8}/S^{1}$ with some non-zero torsion, to torsion-free $Spin(7)$-structures on $\mathcal{M}_{8}$. We will prove a proposition that completely characterizes the situation. 

\begin{prop}
\label{prop:M8holonomy}
Let $\mathcal{M}_{8}\xrightarrow{\pi} \mathcal{M}_{8}/S^{1}$ denote an eight-dimensional oriented $S^{1}$-bundle equipped with a $S^{1}$-invariant admissible four-form $\Omega\in\Omega^{4}_{a}\left(\mathcal{M}_{8}\right)$ and let $\varphi\in\Omega^{3}_{+}\left(\mathcal{M}_{8}/S^{1}\right)$ be the induced $G_{2}$-structure on $\mathcal{M}_{8}/S^{1}$. Let us denote by $v$ the infinitesimal generator of the $S^{1}$-action. Then, $\Omega$ is a torsion-free $Spin(7)$-structure if and only if $\tau = \tau_{2}\in\Omega^{2}_{14}\left(\mathcal{M}_{8}/S^{1}\right)$, where $\tau$ denotes the torsion of $\varphi$, and 

\begin{equation}
\label{eq:conditiontorsion}
d v^{\flat} + \pi^{\ast}\tau_{2} = 0\, ,  \qquad d\| v\| = 0 \, .
\end{equation}
\end{prop}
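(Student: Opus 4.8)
The plan is to reduce everything to the structure equation for a torsion-free $Spin(7)$-structure, namely $d\Omega = 0$ (since for $Spin(7)$ the single condition $d\Omega = 0$ is equivalent to $\nabla\Omega = 0$, by Fern\'andez's theorem). Using Theorem \ref{thm:bijectionM8} I would write $\Omega = w^{\flat}\wedge\phi + \ast(w^{\flat}\wedge\phi)$ with $w = v/\|v\|$ and $\pi^{\ast}\varphi = \phi = \iota_w\Omega$, and then split $d\Omega$ according to the decomposition $T\mathcal{M}_8 = H\oplus\{\lambda v\}$ into a ``horizontal'' part lying in $\Lambda^5 H^{\ast}$ and a part containing one factor of $w^{\flat}$ (equivalently $\theta$). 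Because $\Omega$ is $S^1$-invariant, $\mathcal{L}_v\Omega = 0$, so all the data descend appropriately to $\mathcal{M}_8/S^1$; this is what lets me replace horizontal forms by pullbacks of forms on the base and Lie derivatives along $v$ by zero.

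Concretely, I would first handle the norm: from $\mathcal{L}_v\|v\| = 0$ (already noted in the proof of Theorem \ref{thm:bijectionM8}) and the requirement that the Riemannian submersion picture be consistent, I expect $d\|v\| = 0$ to drop out as one of the components of $d\Omega = 0$ — the terms proportional to $d\|v\|\wedge(\dots)$ must vanish separately because they sit in a representation not matched by the other terms. Next, writing $\theta = v^{\flat} = \|v\| w^{\flat}$ and using the standard structure equations $d\varphi = \tau_0\ast_7\varphi + 3\tau_1\wedge\varphi + \ast_7\tau_3$, $d\ast_7\varphi = 4\tau_1\wedge\ast_7\varphi + \tau_2\wedge\varphi$ from \eqref{G2torsionclasses}, together with $d\theta = \pi^{\ast}F$ for the curvature two-form $F$ of the connection, I would expand $d\Omega = d(\theta\wedge\varphi/\|v\|) + d\ast(\dots)$ and collect terms. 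The vanishing of the purely horizontal ($\Lambda^5 H^{\ast}$) component should force $\tau_0 = \tau_1 = \tau_3 = 0$, i.e. $\tau = \tau_2$, while the vanishing of the component containing $\theta$ should force $F = d v^{\flat} = -\pi^{\ast}\tau_2$ (after absorbing $\|v\|$, which is constant by the first step). Conversely, substituting $\tau = \tau_2$ and \eqref{eq:conditiontorsion} back in and using $d\tau_2$'s own $G_2$-decomposition — here one needs that $\tau_2\in\Omega^2_{14}$ satisfies $\tau_2\wedge\ast_7\varphi = 0$ and $\tau_2\wedge\varphi = -\ast_7\tau_2$, plus the Bianchi-type identity $dF = 0$ hence $d\pi^{\ast}\tau_2 = 0$ — should show $d\Omega = 0$, giving the ``if'' direction.

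The main obstacle I anticipate is the bookkeeping of the $G_2$-representation content: I must carefully project $d\varphi$, $d\ast_7\varphi$, and the curvature-coupling terms $F\wedge\varphi$, $\ast(F\wedge\varphi)$ onto irreducible $G_2$-pieces to see exactly which combinations can cancel, and in particular to verify that no ``cross terms'' between the connection curvature and the torsion classes survive that would impose extra constraints beyond \eqref{eq:conditiontorsion}. The key algebraic facts I will lean on are the contraction identities for a positive three-form (how $\iota_v$, wedging with $\varphi$ and $\ast_7\varphi$, and Hodge star interact), the characterization $\Omega^2_{14} = \{\alpha : \ast_7(\varphi\wedge\alpha) = -\alpha\}$, and Fern\'andez's result that $d\Omega = 0$ alone characterizes $Spin(7)$-holonomy. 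A secondary point to be careful about is that the curvature $F$ of the $S^1$-connection is a basic (pullback) form and closed, so $d\pi^{\ast}\tau_2 = 0$ is automatic from \eqref{eq:conditiontorsion}, which is needed for consistency of the ``if'' direction and also re-derives that $\tau_2$ is a closed (hence, together with $\tau_2\in\Omega^2_{14}$, harmonic) two-form — matching the claim made earlier in the text about harmonic $\tau_2$.
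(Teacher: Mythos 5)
Your overall strategy matches the paper's proof of Proposition~\ref{prop:M8holonomy}: both use Theorem~\ref{thm:bijectionM8} to write $\Omega = w^{\flat}\wedge\phi + \ast(w^{\flat}\wedge\phi)$ with $w = v/\|v\|$, reduce the torsion-free condition to $d\Omega = 0$, and split $d\Omega$ along $T\mathcal{M}_8 = H\oplus\langle v\rangle$. However, there are two concrete misattributions in your sketch that would trip you up if you carried it out.

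First, you have the horizontal/vertical bookkeeping backwards. In $d\Omega = dw^{\flat}\wedge\phi - w^{\flat}\wedge\pi^{\ast}d_7\varphi + \pi^{\ast}d_7\ast_7\varphi$, the terms $\pi^{\ast}d_7\varphi$ and $\pi^{\ast}d_7\ast_7\varphi$ are purely horizontal, so the $\theta$\nobreakdash-containing (vertical) piece is $w^{\flat}\wedge\bigl[(\iota_w dw^{\flat})\wedge\varphi - \pi^{\ast}d_7\varphi\bigr]$, which is exactly what the paper extracts by contracting with $w$. By $G_2$ representation theory this vertical piece kills $\tau_0$ and $\tau_3$ and constrains $\tau_1$, while it is the \emph{horizontal} piece, $(dw^{\flat})^{\mathrm{hor}}\wedge\varphi + 4\tau_1\wedge\ast_7\varphi + \tau_2\wedge\varphi$, that delivers $dw^{\flat} + \pi^{\ast}\tau_2 = 0$. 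You claim the horizontal part forces $\tau_0=\tau_1=\tau_3=0$ and the $\theta$\nobreakdash-part forces the curvature relation; that is exactly reversed.

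Second, and more importantly, your argument for $d\|v\| = 0$ cannot work as stated. You claim the $d\|v\|$\nobreakdash-terms ``sit in a representation not matched by the other terms.'' But $\iota_w dw^{\flat} = -\,d\|v\|/\|v\|$ is a basic one-form, and when it enters the vertical piece of $d\Omega$ wedged with $\varphi$ it lands in the ${\bf 7}$ of $G_2$ -- precisely the same representation as $\tau_1\wedge\varphi$. Representation theory alone therefore only gives $3\tau_1 = \iota_w dw^{\flat}$, not the separate vanishing of both. The paper closes this loop differently: once $\tau_0=\tau_1=\tau_3=0$ are established, $d\phi = 0$ and hence $\mathcal{L}_w\Omega = d\iota_w\Omega + \iota_w d\Omega = 0$; combined with the hypothesis $\mathcal{L}_v\Omega = 0$ and the identity $\mathcal{L}_v\Omega = \|v\|\,\mathcal{L}_w\Omega + d\|v\|\wedge\iota_w\Omega = d\|v\|\wedge\phi$, injectivity of wedging a one-form with a positive three-form gives $d\|v\| = 0$. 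This Lie-derivative step is the ingredient your proposal is missing. On the plus side, your remark that the converse direction needs $d\pi^{\ast}\tau_2 = 0$ (automatic since $\pi^{\ast}\tau_2 = -dv^{\flat}$) and the harmonicity of $\tau_2$ is correct and is made explicit only after the proof in the paper.
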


\begin{proof}
Using theorem \ref{thm:bijectionM8}, the induced $G_{2}$-structure on $\mathcal{M}_{8}/S^{1}$ is given by $\pi^{\ast}\left(\varphi\right) = \iota_{w}\Omega = \phi$ (where $\omega=\frac{v}{||v||}$) and we can write $\Omega$ as follows

\begin{equation}
\Omega = w^{\flat}\wedge\phi + \ast\left(w^{\flat}\wedge\phi\right)\, .
\end{equation}

\noindent
The torsion-free condition $d\Omega = 0$ translates into

\begin{equation}
\label{eq:conditionholonomy1}
\left\{ dw^{\flat} - 3w^{\flat}\wedge\tau_{1} + \tau_{2}\right\}\wedge\varphi + \left\{4\tau_{1} - w^{\flat}\tau_{0}  \right\}\wedge \ast_{7}\varphi-w^{\flat}\wedge\ast_{7}\tau_{3} = 0\, ,
\end{equation}

\noindent
where for notational simplicity we omitted the pull-back of all seven-dimensional objects, $\tau_i$ are the torsion classes defined in (\ref{G2torsionclasses}) and we have used 
\begin{equation} \label{oldlemma}
\ast\phi = w^{b}\wedge \pi^{\ast}\left(\ast_{7}\varphi\right) \ ,
\end{equation}
which is straightforward to show. 
Contracting  \eqref{eq:conditionholonomy1} with $w$ and using that the metric $g_{8}$ is $S^{1}$-invariant we obtain

\begin{equation}
-\tau_{3} = 3\ast_{7}\left(\tau_{1}\wedge\varphi\right) +\tau_{0}\varphi\, ,
\end{equation}

\noindent
which implies $\tau_{0} = \tau_{1} = \tau_{3} = 0$ in order for $\tau_{3}\in \Omega^{3}_{27}\left(\mathcal{M}_{8}/S^{1}\right)$. Therefore equation \eqref{eq:conditionholonomy1} translates into

\begin{equation}
dw^{\flat} + \pi^{\ast}\tau_{2} = 0\, .
\end{equation}

\noindent
On the other hand, $\tau_{0} = \tau_{1} = \tau_{3} = 0$ imply that  $\mathcal{L}_{w}\Omega = 0$. However, by assumption $\mathcal{L}_{v}\Omega = 0$ and thus we obtain

\begin{equation}
d\| v\| = 0\, ,
\end{equation}

\noindent
and we conclude.
\end{proof}

\noindent
Note that proposition \ref{prop:M8holonomy} implies that $\varphi$ satisfies

\begin{equation}
d_{7}\varphi = 0\, , \qquad d_{7}\ast_{7}\varphi = \tau_{2}\wedge\varphi\, ,
\end{equation}

\noindent
and thus $\tau_{2}$ is a closed two-form. In fact, since $\tau_{2}\in\Omega^{2}_{14}\left(\mathcal{M}_{8}/S^{1}\right)$ it has to satisfy the constraint

\begin{equation}
\tau_{2}\wedge\varphi + \ast_{7}\tau_{2} = 0\, ,
\end{equation}

\noindent
and hence applying the exterior derivative we obtain that $d\ast\tau_{2} = 0$, so $\tau_{2}$ is closed and coclosed and thus

\begin{equation}
\Delta_{7}\tau_{2} = 0\, ,
\end{equation}

\noindent
that is, $\tau_{2}$ is a harmonic two-form on $\mathcal{M}_{8}/S^{1}$. Furthermore, using that $\tau_{2}$ is closed and coclosed in equation \eqref{eq:conditiontorsion} we arrive at

\begin{equation}
\Delta\, dv^{\flat} = 0\, ,
\end{equation}

\noindent
namely $dv^{\flat}$ is a harmonic two-form on $\mathcal{M}_{8}$. We see then that the requirement of a $Spin(7)$ torsion-free $S^{1}$-invariant structure imposes strong constrains on the induced $G_{2}$-structure, namely the only possible non-zero torsion class is $\tau_2$ and it has to be an harmonic function. 
Indeed, in theorem \ref{thm:bijectionM8} the set-up, although naturally compatible with the $S^{1}$-action on $\mathcal{M}_{8}$, is quite constrained: we start with a $S^{1}$-invariant admissible four-form $\Omega$, which induces a Riemannian metric $g_{8}$, and we rewrite $\Omega$ as

\begin{equation}
\Omega = w^{\flat}\wedge \iota_{w}\Omega + \ast\left(w^{\flat}\wedge \iota_{w}\Omega\right)\, .
\end{equation} 

\noindent
We then require the $G_{2}$ structure $\varphi$ to be given by $\pi^{\ast}\varphi = \iota_{w}\Omega$. This way, we guarantee that $\pi\colon\mathcal{M}_{8}\to \mathcal{M}_{8}/S^{1}$ is a Riemannian submersion with respect to the corresponding induced metric. 

A less restricted scenario would be to relax the $S^1$-invariance, namely still consider  an eight-dimensional $S^{1}$-bundle $\mathcal{M}_{8}\xrightarrow{\pi}\mathcal{M}_{8}/S^{1}$ equipped with an admissible four-form, but the latter is not necessarily $S^{1}$-invariant. Start from an arbitrary $G_{2}$ structure $\tilde{\varphi}$ on $\mathcal{M}_{8}/S^1$,  and we will construct an admissible four-form on $\mathcal{M}_{8}$ as follows

\begin{itemize}

\item Pick a codimension-one distribution $H\subset T\mathcal{M}_{8}$ transverse to $\tilde{v}$, where $\tilde{v}$ is a vector field $C^{\infty}\left(\mathcal{M}_{8}\right)$-proportional to $v$.

\item Equip $\mathcal{M}_{8}$ with the unique Riemannian metric $\tilde{g}_{8}$ such that $\|\tilde{v}\| = 1$, $H = \tilde{v}^{\perp}$ and which makes $\pi\colon\mathcal{M}_{8}\to \mathcal{M}_{8}/S^{1}$ into a Riemannian submersion.

\item Equip $\mathcal{M}_{8}$ with the following admissible four-form

\begin{equation}
\label{eq:tildeOmega}
\tilde{\Omega} = \tilde{v}^{\flat}\wedge \tilde{\phi} + \ast\left(\tilde{v}^{\flat}\wedge \tilde{\phi}\right)
\end{equation}

\noindent
Then, by means of theorem 5.27 of reference \cite{Grana:2014rpa} we have that $\tilde{\Omega}$ induces $\tilde{g}_{8}$ and in general it will not be $S^{1}$-invariant. 

\end{itemize}

\noindent
Notice that are now in exactly the same set-up as in proposition \ref{prop:M8holonomy} but dropping the requirement of $\Omega$ being $S^{1}$-invariant. We obtain the following proposition.

\begin{prop}
\label{prop:M8holonomyII}
Let $\mathcal{M}_{8}\xrightarrow{\pi} \mathcal{M}_{8}/S^{1}$ denote an eight-dimensional oriented spin $S^{1}$-bundle and let us pick a $G_{2}$-structure $\varphi$ on $\mathcal{M}_{8}/S^{1}$. Let us equip $\mathcal{M}_{8}$ with an admissible four-form $\tilde{\Omega}\in\Omega^{4}_{a}\left(\mathcal{M}_{8}\right)$ as in equation \eqref{eq:tildeOmega}, and let $v$ be the infinitesimal generator of the $S^{1}$-action. Then, $\tilde{\Omega}$ is a torsion-free $Spin(7)$-structure if and only if $\tau = \tau_{2}\in\Omega^{2}_{14}\left(\mathcal{M}_{8}/S^{1}\right)$, where $\tau$ denotes the torsion of $\varphi$, and 

\begin{equation}
\label{eq:conditiontorsionII}
d \tilde{v}^{\flat} + \pi^{\ast}\tau_{2} = 0\, ,\qquad \mathcal{L}_{\tilde{v}}\tilde{\Omega} = 0\, . 
\end{equation}
\end{prop}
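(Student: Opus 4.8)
The plan is to run the argument of Proposition~\ref{prop:M8holonomy} essentially verbatim, the one genuinely new point being that the $S^{1}$-invariance of $\Omega$ used there is now replaced, step for step, by the requirement $\mathcal{L}_{\tilde v}\tilde\Omega=0$, which can do the same work because $\|\tilde v\|=1$ by construction. Since $\pi$ is still a Riemannian submersion with unit vertical field $\tilde v$, the pointwise identity \eqref{oldlemma} holds with $w$ replaced by $\tilde v$, so $\tilde\Omega=\tilde v^{\flat}\wedge\pi^{\ast}\varphi+\pi^{\ast}(\ast_{7}\varphi)$; inserting the torsion decomposition \eqref{G2torsionclasses} of $d\varphi$ and $d\ast_{7}\varphi$ reproduces verbatim the equation \eqref{eq:conditionholonomy1} governing $d\tilde\Omega=0$, with $w^{\flat}$ replaced everywhere by $\tilde v^{\flat}$ and all seven-dimensional forms understood as pullbacks by $\pi$.

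The direction ``$\Leftarrow$'' is then immediate: substituting the hypotheses $\tau=\tau_{2}$ (so $d\varphi=0$ and $d\ast_{7}\varphi=\tau_{2}\wedge\varphi$) and $d\tilde v^{\flat}=-\pi^{\ast}\tau_{2}$, the surviving terms of $d\tilde\Omega$ are $d\tilde v^{\flat}\wedge\pi^{\ast}\varphi+\pi^{\ast}(\tau_{2}\wedge\varphi)=0$; and since $\iota_{\tilde v}\tilde\Omega=\pi^{\ast}\varphi$, Cartan's formula gives $\mathcal{L}_{\tilde v}\tilde\Omega=d\pi^{\ast}\varphi=0$, so all the conditions \eqref{eq:conditiontorsionII} hold.

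For ``$\Rightarrow$'' I would split $d\tilde\Omega=0$ into the component proportional to $\tilde v^{\flat}$ and the purely horizontal component, writing $d\tilde v^{\flat}=\tilde v^{\flat}\wedge(\iota_{\tilde v}d\tilde v^{\flat})+b$ with $b$ horizontal. The $\tilde v^{\flat}$-component is $(\iota_{\tilde v}d\tilde v^{\flat}-3\tau_{1})\wedge\varphi-\tau_{0}\ast_{7}\varphi-\ast_{7}\tau_{3}=0$, and since its three summands lie in the pairwise distinct $G_{2}$-submodules $\mathbf{7}$, $\mathbf{1}$ and $\mathbf{27}$ of $\Lambda^{4}(\mathcal{M}_{8}/S^{1})$ (here using that $\beta\mapsto\beta\wedge\varphi$ is injective), this forces $\tau_{0}=\tau_{3}=0$ and $\iota_{\tilde v}d\tilde v^{\flat}=3\tau_{1}$. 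The decisive step is to invoke $\mathcal{L}_{\tilde v}\tilde\Omega=0$: because the metric $\tilde g_{8}$ is a pointwise algebraic function of the admissible four-form, $\mathcal{L}_{\tilde v}\tilde\Omega=0$ forces $\mathcal{L}_{\tilde v}\tilde g_{8}=0$, hence $\mathcal{L}_{\tilde v}\tilde v^{\flat}=0$; as $\iota_{\tilde v}\tilde v^{\flat}=\|\tilde v\|^{2}=1$ is constant, Cartan's formula then gives $\iota_{\tilde v}d\tilde v^{\flat}=\mathcal{L}_{\tilde v}\tilde v^{\flat}=0$, so combining with the above, $\tau_{1}=0$ and $\tau=\tau_{2}\in\Omega^{2}_{14}(\mathcal{M}_{8}/S^{1})$. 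With $\tau_{1}=0$ the horizontal component of $d\tilde\Omega=0$ reduces to $(b+\tau_{2})\wedge\varphi=0$, whence $b=-\tau_{2}$ by injectivity of $\wedge\varphi$ on two-forms, and therefore $d\tilde v^{\flat}=b=-\pi^{\ast}\tau_{2}$.

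The only step beyond bookkeeping is isolating $\tau_{1}=0$: $d\tilde\Omega=0$ by itself only pins $\iota_{\tilde v}d\tilde v^{\flat}$ to $3\tau_{1}$, and it is precisely $\mathcal{L}_{\tilde v}\tilde\Omega=0$ --- playing here the role that $\mathcal{L}_{v}\Omega=0$ played in Proposition~\ref{prop:M8holonomy} --- that, via the dependence of $\tilde g_{8}$ on $\tilde\Omega$, closes the loop. I expect the remaining verifications to be identical to the $S^{1}$-invariant case, with no further obstacle.
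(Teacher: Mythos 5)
Your ``$\Leftarrow$'' direction and the expansion of $d\tilde\Omega$ into the identity \eqref{eq:conditionholonomy} are fine and run parallel to the paper. The gap is in the logical structure of your ``$\Rightarrow$'' step. You invoke $\mathcal{L}_{\tilde v}\tilde\Omega=0$ as an input --- ``the decisive step is to invoke $\mathcal{L}_{\tilde v}\tilde\Omega=0$'' --- in order to deduce $\mathcal{L}_{\tilde v}\tilde g_8=0$, hence $\iota_{\tilde v}d\tilde v^\flat=0$, hence $\tau_1=0$. But in Proposition~\ref{prop:M8holonomyII} the equation $\mathcal{L}_{\tilde v}\tilde\Omega=0$ sits on the \emph{right-hand side} of the biconditional: it is one of the conditions that torsion-freeness must be shown to \emph{imply}. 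The whole point of this proposition, compared to Proposition~\ref{prop:M8holonomy}, is that the $S^{1}$-invariance assumption on the four-form has been dropped; your opening sentence (``the $S^1$-invariance of $\Omega$ used there is now replaced, step for step, by the requirement $\mathcal{L}_{\tilde v}\tilde\Omega=0$'') quietly re-instates that assumption under another name. What your argument actually establishes is the weaker equivalence ``($d\tilde\Omega=0$ and $\mathcal{L}_{\tilde v}\tilde\Omega=0$) iff ($\tau=\tau_2$ and $d\tilde v^\flat+\pi^\ast\tau_2=0$ and $\mathcal{L}_{\tilde v}\tilde\Omega=0$),'' which never addresses whether $d\tilde\Omega=0$ alone forces $\tau_1=0$.

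The paper's proof works with $d\tilde\Omega=0$ only. Contracting \eqref{eq:conditionholonomy} with $\tilde v$ gives $\tau_0=\tau_3=0$ and $\iota_{\tilde v}d\tilde v^\flat=3\pi^\ast\tau_1$ (your calculation up to here is the same). These relations are then substituted \emph{back} into the full five-form identity \eqref{eq:conditionholonomy}, and it is that substitution which is used to kill $\tau_1$; no invariance is assumed at this stage. Only once $\tau_0=\tau_1=\tau_3=0$ is established does one get $d\phi=\pi^\ast d_7\varphi=0$, and then $\mathcal{L}_{\tilde v}\tilde\Omega = d\,\iota_{\tilde v}\tilde\Omega + \iota_{\tilde v}d\tilde\Omega = d\phi = 0$ comes out as a \emph{consequence}, together with $\iota_{\tilde v}d\tilde v^\flat=0$ and hence $\mathcal{L}_v\tilde g_8=0$. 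Your algebraic chain $\mathcal{L}_{\tilde v}\tilde\Omega=0\Rightarrow\mathcal{L}_{\tilde v}\tilde g_8=0\Rightarrow\iota_{\tilde v}d\tilde v^\flat=0$ is correct as a chain, but it runs in the opposite logical direction to what the proposition requires. To close the gap you must derive $\tau_1=0$ from the closure condition itself, as in the paper's ``plugging back'' step, and only afterwards read off $\mathcal{L}_{\tilde v}\tilde\Omega=0$.
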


\begin{proof}
The torsion-free condition $d\tilde{\Omega} = 0$ translates into

\begin{equation}
\label{eq:conditionholonomy}
\left\{ d\tilde{v}^{\flat} - 3\tilde{v}^{\flat}\wedge\tau_{1} + \tau_{2}\right\}\wedge\varphi + \left\{4\tau_{1} - \tilde{v}^{\flat} \tau_{0}\right\}\wedge \ast_{7}\varphi-\tilde{v}^{\flat}\wedge\ast_{7}\tau_{3} = 0\, ,
\end{equation}
where again the pull-back of seven-dimensional objects should be understood.
 Contracting equation \eqref{eq:conditionholonomy} with $\tilde v$ we obtain

\begin{equation}
\label{eq:condition2}
\left\{\iota_{\tilde{v}} d\tilde{v}^{\flat} - 3\tau_{1}\right\}\wedge\varphi + \tau_{0} \wedge\ast_{7}\varphi-\ast_{7}\tau_{3} = 0\, ,
\end{equation}

\noindent
which implies

\begin{equation}
\label{eq:solutionhol}
\iota_{\tilde{v}} d\tilde{v}^{\flat} - 3\pi^{\ast}\tau_{1} = 0\, ,\qquad \tau_{0} = \tau_{3} = 0\, .
\end{equation}

\noindent
 However, plugging this back into  \eqref{eq:conditionholonomy} we obtain that $\tau_{1} = 0$  and thus

\begin{equation}
d\tilde{v}^{\flat} + \pi^{\ast}\tau_{2} = 0\, , \qquad \iota_{\tilde{v}} d\tilde{v}^{\flat} = 0\, ,
\end{equation}

\noindent
which implies that $\mathcal{L}_{\tilde{v}}\tilde{v}^{\flat} = 0$ and thus $\mathcal{L}_{v}\tilde{g}_{8} = 0$. Furthermore, since $\tau_{0} = \tau_{1} = \tau_{3} = 0$ we have that $d\phi = 0$ and therefore $\mathcal{L}_{\tilde{v}}\tilde \Omega = 0$, although in general $\tilde{\Omega}$ it is not $S^{1}$-invariant. Therefore, we can write the condition of $Spin(7)$-holonomy as

\begin{equation}
d\tilde{v}^{\flat} + \pi^{\ast}\tau_{2} = 0\, , \qquad \mathcal{L}_{\tilde{v}}\tilde{\Omega} = 0\, .
\end{equation}
\end{proof}

\noindent
Therefore we see then that $Spin(7)$-holonomy implies $\mathcal{L}_{\tilde{v}}\tilde{\Omega} = 0$ and when $\tilde{v} = v$ we go back to proposition \ref{prop:M8holonomy} and we finally obtain that $\Omega$ is $S^{1}$-invariant. 

From the supersymmetry conditions relating fluxes to torsion classes obtained in \cite{Kaste:2003zd}, it seems that a supersymmetric compactification to Minskowski with torsion given only by $\tau_2$ is possible. We leave more explicit constructions with more general $G_{2}$-torsion classes, and thus involving higher-dimensional manifolds, for future work.


\section{Relating moduli spaces}
\label{sec:modulispaces}


In this section we give an idea of how to extract  information about the moduli space of supersymmetric $G_{2}$-structures on a seven-dimensional manifold from the moduli space of torsion-free $Spin(7)$-structures on an eight-dimensional one. It is intended to be only descriptive. 

Recall that (see definition \ref{def:susyG2}), given a seven-dimensional oriented and spin manifold $\mathcal{M}_{7}$, a supersymmetric $G_2$-structure is defined to be a quadruplet $\left(\varphi,G,e^{2A},\mu\right)$ satisfying a set of differential conditions coming from the Killing spinor equation \eqref{eq:N1susy11}. We will denote by $\mathcal{B}$ the set of all supersymmetric $G_{2}$-structures $\left(\varphi,G,e^{2A},\mu \right)$ on a given manifold $\mathcal{M}_7$. 

\begin{definition}
The moduli space $\mathcal{X}_{7}$ of supersymmetric $G_{2}$-structures on $\mathcal{M}_{7}$ is the space $\mathcal{B}$ of all supersymmetric $G_{2}$-structures modulo the appropriate equivalence relation, namely

\begin{equation} 
\label{chi7}
\mathcal{X}_{7} = \left\{\left(\varphi,\mathsf{G},e^{2A},\mu\right) \in\mathfrak{B}\right\}/\tilde{\mathcal{D}}_{7}\, ,
\end{equation}

\noindent
where $\tilde{\mathcal{D}}_{7}$ relates equivalent supersymmetric $G_{2}$-structures, to wit

\begin{equation}
\left(\varphi,G,e^{2A} ,\mu\right)\sim \left(\tilde{\varphi},\tilde{G}, e^{2\tilde A} ,\tilde \mu \right)
\end{equation}

\noindent
if they are related by a diffeomorphism or by a gauge transformation of $G$. We will denote the elements of $\mathcal{X}_{7}$, namely the corresponding equivalence classes, by $\left[\varphi,G,e^{2A},\mu \right]$.
\end{definition}

\noindent
The moduli space $\mathcal{X}_{7}$ of supersymmetric $G_{2}$-structures remains as a largely unexplored space, and, to the best of our knowledge it has not even been  proven if it is a finite-dimensional manifold. We are going to argue that such space can be related, at least in some special instances, to the moduli space of torsion-free structures on a higher-dimensional manifold, which is a space much more under control. Indeed, this procedure can be justified to work from the exceptional generalized geometry formulation of the moduli space; we will come back again to this point at the end of the section. For now, we will illustrate the procedure in the simplest case where the higher-dimensional manifold is eight-dimensional and thus it can be identified with an intermediate manifold, keeping in mind that for higher-dimensional manifolds the construction proceeds along similar lines. In fact, the construction presented in section \ref{sec:intermediateEGG} already suggests to consider instead of the space of supersymmetric $G_{2}$-structures an eight-dimensional manifold $\mathcal{M}_{8}$ constructed from $\mathcal{M}_{7}$ in such a way that a supersymmetric $G_{2}$-structure on $\mathcal{M}_{7}$ induces a torsion-free $Spin(7)$-structure on $\mathcal{M}_{8}$. Let us just very briefly recall some basic facts about the moduli space of torsion-free $Spin(7)$-structures on $\mathcal{M}_{8}$.

Given an eight-dimensional, compact, oriented manifold ${\cal M}_8$, the moduli space of torsion-free $Spin(7)$-structures on $\mathcal{M}_{8}$ is given by \cite{Joyce2007}

\begin{equation}
\mathcal{X}_{8} \equiv \mathcal{N}_{8}/\mathcal{D} = \left\{\Omega\in\Omega^{4}_{a}\left(\mathcal{M}_{8}\right)\,\, | \,\, d\Omega = 0 \right\}/\mathcal{D}\, ,
\end{equation} 

\noindent
where $\Omega^{4}_{a}\left(\mathcal{M}\right)$ is the set of admissible forms on $\mathcal{M}_{8}$ (see footnote \ref{foot:admissible}), or in other words the space of sections of the bundle $\mathcal{A}^4\mathcal{M}\to\mathcal{M}_{8}$ with fibre $Gl_{+}\left(8,\mathbb{R}\right)/Spin(7)$, and $\mathcal{D}$ denotes de group of diffeomorphisms of $\mathcal{M}_{8}$ isotopic to the identity. $\mathcal{X}_{8}$ is a smooth manifold of dimension $\dim\, \mathcal{X}_{8} = b_{1}^4 + b_{7}^4 + b_{35}^4$, where $b_{1}$, $b_{7}$ and $b_{35}$ denote the betti numbers of the cohomology group of 4-forms corresponding to the $Spin(7)$-representation indicated by the subscript. 

We want to use ${\mathcal X}_8$ to infer properties about ${\mathcal X}_7$. For that, we have shown that one can 
construct the eight-dimensional manifold $\mathcal{M}_{8}$ from a seven-dimensional one $\mathcal{M}_{7}$ such that from every  supersymmetric $G_{2}$-structure on $\mathcal{M}_{7}$ one can construct a topological $Spin(7)$-structure $\Omega$ on $\mathcal{M}_{8}$. We have thus a map

\begin{equation}
\label{eq:spacemap}
\mathfrak{I}_{Top}\colon  {\mathfrak{B}} \to \Omega_{a}\left(\mathcal{M}_8\right) \, .
\end{equation}

\noindent
from the set of supersymmetric $G_{2}$-structures on $\mathcal{M}_{7}$ to the set of topological $Spin(7)$-structures on $\mathcal{M}_{8}$. Now, in order to relate ${\mathfrak{B}}$ to a known moduli space we demand the image of $\mathfrak{J}_{Top}$ to be the subset  $\mathcal{N}_{8}\subset\Omega_{a}\left(\mathcal{M}_8\right)$ given by the torsion-free $Spin(7)$-structures, whose moduli space is well under control. We define thus $\mathfrak{I}_{Hol}$, the corresponding restriction of $\mathfrak{I}_{Top}$, as follows

\begin{equation}
\label{eq:spacemapII}
\mathfrak{I}_{Hol} \equiv\mathfrak{I}_{Top}|_{\mathfrak{B}_{Hol}}\colon\mathfrak{B}_{Hol}\to \mathcal{N}_8 \, ,
\end{equation}

\noindent
where $\mathfrak{B}_{Hol} = \mathfrak{I}^{-1}_{Top}\left(\mathcal{N}_{8}\right)\subset\mathfrak{B}$ is the restricted set of supersymmetric $G_{2}$-structures that induce through $\mathfrak{I}_{Top}$ a torsion-free $Spin(7)$-structure on $\mathcal{M}_{8}$. We presented explicit examples of $\mathfrak{I}_{Hol}$ of this in sections \ref{sec:Spin7cone} and \ref{sec:Spin7bundle}. For example, in the $Spin(7)$-cone construction, for every nearly $G_{2}$ structure on $\mathcal{M}_{7}$ one obtains a torsion-free $Spin(7)$ structure on $\mathcal{M}_{8}$, although in this case the $\mathcal{M}_{8}$ turns out to be non-compact. 

The last step consists in evaluating if the map \eqref{eq:spacemapII} descends to a well defined map on the corresponding equivalence classes. Assuming that it is the case, then we obtain a new map

\begin{equation}
\mathfrak{I}\colon\mathfrak{B}_{Hol}/\tilde{D}\to \mathcal{X}_{8}\, ,
\end{equation}

\noindent
between the moduli space $\mathfrak{B}_{Hol}/\tilde{D}$ of restricted supersymemtric $G_{2}$-structures on $\mathcal{M}_{7}$ and the moduli space of torsion-free $Spin(7)$-structures on $\mathcal{M}_{8}$. From the particular properties of this map, one may be able to extract information about $\mathfrak{B}_{Hol}/\tilde{D}$ taking into account what is known about $\mathcal{X}_{8}$. The first property to be checked is the continuity and injectivity of the map. Remarkably enough, if one can proove that $\mathfrak{I}$ is a local homeomorphism between topological spaces, then $\mathfrak{B}_{Hol}/\tilde{D}$ would be a topological manifold of the same dimension as $\mathcal{N}_{8}$, namely $b_{1}^4 + b_{7}^4 + b_{35}^4$. That is, we would have obtained the dimension of the moduli space of restricted supersymmetric $G_{2}$-structures on $\mathcal{M}_{7}$ in terms of the topological data of $\mathcal{M}_{8}$.

For supersymmetry reasons, the moduli space $\mathcal{X}_{8}$ must be identified with the K\"ahler-Hodge manifold appearing in the non-linear sigma model of four-dimensional $\mathcal{N}=1$ Supergravity. A K\"ahler-Hodge manifold is a K\"ahler manifold such that the K\"ahler form represents an integer cohomology class, and therefore is the curvature form of the associated $U(1)$-bundle. There are several obstructions on a manifold to be K\"ahler-Hodge, the most obvious one being that its dimension must be even. Therefore, if the map $\mathfrak{I}$ exists then the dimension of $\mathcal{X}_{8}$ must be even, and we conclude that there are as well topological obstructions to the existence of $\mathfrak{I}$.

The example of the $Spin(7)$-conifold is again useful. In that case, there is a one-to-one correspondence between nearly $G_{2}$-structures on $\mathcal{M}_{7}$ and torsion-free $Spin(7)$-structures on $\mathcal{M}_{8}$. However, the resulting $\mathcal{M}_{8}$ is non-compact so there are not as many available results about its moduli space as there are in the compact case. In fact, we have been talking about an eight-dimensional manifold $\mathcal{M}_{8}$ constructed from $\mathcal{M}_{7}$. Ideally we would like this $\mathcal{M}_{8}$ to be compact, since in that case the moduli space of torsion-free $Spin(7)$-structures is under control. However, as we already explained, this is a too optimistic scenario, since constructing a compact $Spin(7)$-manifold $\mathcal{M}_{8}$ from $\mathcal{M}_{7}$ is an extremely difficult task, for instance there are no isometries that may allow for an adapted system of coordinates to write the metric. Still, constructing a non-compact $\mathcal{M}_{8}$ is interesting, because depending on its asymptotics, some results are also known about the moduli space, in particular if it is a smooth manifold of a finite dimensionality, which would be the first thing to check for the moduli space of supersymmetric $G_{2}$-structures on $\mathcal{M}_{7}$. 

The procedure explained here refers to seven-dimensional manifolds equipped with a supersymmetric $G_{2}$-structure which induces a torsion-free $Spin(7)$-structure in an eight-dimensional manifold $\mathcal{M}_{8}$. It is to be expected that if the supersymmetric $G_{2}$-structure is general enough, i.e. if it has many nonzero torsion classes, then there is no way it induces a torsion-free $Spin(7)$ structure on any eight-dimensional manifold, namely the inclusion $\mathfrak{B}_{Hol}\subset\mathfrak{B}$ is strict. It is natural to ask under which conditions one should expect $\mathfrak{B}_{Hol} =\mathfrak{B}$, namely, when we can \emph{geometrize} the full space of supersymemtric $G_{2}$-structures. We expect that considering a $n>\, 8\,\,$- dimensional manifold of special holonomy and doing the same procedure, this time imposing that the supersymmetric $G_{2}$-structure in seven dimensions induces a torsion-free structure in $n$-dimensions, which can be for example a $SU(\frac{n}{2})$ or $Sp(\frac{n}{4})$ torsion-free structure, one might be able to completely geometrize the most general supersymmetric $G_{2}$-structure. This is an open question that can be formulated as follows

\begin{itemize}

\item Is there a $n$-dimensional principal fibre bundle $\mathcal{M}$ with fibre $G$ and seven-dimensional base $\mathcal{M}/G$ such that every $G_{2}$-structure on $\mathcal{M}_{7}$ induces a torsion-free structure on $\mathcal{M}$? 

\end{itemize}

\noindent
In the eight-dimensional case  $\mathcal{M}_{8}$ corresponds to an intermediate manifold. The question is now, if we choose to embed the supersymmetric $G_{2}$-torsion in a higher dimensional manifold, is there again a connection to exceptional generalized geometry as there is in the eight-dimensional case? The answer is afirmative. Recall that the definition \ref{def:intermediatemanifolds} of an intermediate manifold was based on the decomposition of the ${\bf 912}$ of $E_{7(7)}\times\mathbb{R}^{+}$ in terms of $Sl(8,\mathbb{R})$ representations, which, acting on an eight-dimensional vector space required an eight-dimensional manifold. However, we can consider other subgroups $G\subset E_{7(7)}\times\mathbb{R}^{+}$ and follow the same steps defining now a $n$-dimensional manifold, where $n$ is equal to the dimension of the representation of $G$, equipped with the tensors $\mathfrak{S}_{n}$ appearing in the corresponding branching rule. This way, we can translate again all the information of the moduli space to a $n$-dimensional manifold $\mathcal{M}_{n}$ equipped with the appropriate tensors, and impose a torsion-free condition on them. Imposing the torsion-free condition will again restrict the moduli space to the appropriate subspace, but since the manifold is of a higher dimension, the restriction will be less severe than on the eight-dimensional case. Therefore, in some sense, we have a dictionary to go from the description of the moduli space given in exceptional generalized geometry, to the particular geometry of a $n$-dimensional manifold $\mathcal{M}_{n}$ whose dimension and geometric characteristics are determined by a subgroup $G$ of $G\subset E_{7(7)}\times\mathbb{R}^{+}$. In this note however we have focused for simplicity only on the eight-dimensional case.


\section{Summary}

 
It is convenient to give a brief summary of the main results of the paper. 

\begin{itemize}

\item Using Exceptional Generalized Geometry, the moduli space of M-theory $\mathcal{N}=1$ compactifications to four dimensions can be described through the space of sections of a particular $E_{7(7)}\times\mathbb{R}^{+}$-bundle over the internal space $\mathcal{M}_{7}$. We propose that, using a subgroup $G\subset E_{7(7)}\times\mathbb{R}^{+}$, all the information of this bundle can be translated to an $n$-dimensional manifold $\mathcal{M}_{n}$, using appropriate sections $\mathfrak{S}$ of its tensor bundles, determined by the branching rule $E_{7(7)}\times\mathbb{R}^{+}\to G$. Here $n$ is the dimension of the representation of $G$.

\item Since the moduli space of $\mathfrak{S}$-tensors on $\mathcal{M}_{n}$ corresponds to the moduli space of M-theory compactifications, we propose the existence of a map $\mathfrak{I}$ that relates both spaces and which can be used to study the geometry of the latter from the geometry of the former.

\item In order to fully specify the geometry of the moduli space of $\mathfrak{S}$-tensors, we need to know the supersymmetry conditions in the exceptional generalized language, something that is not currently available in the literature. Therefore, we consider the case where $\mathfrak{S}$ induces a classical torsion-free structure on $\mathcal{M}_{n}$, when $n$ is eight. The moduli space of classical torsion-free structures is well under control, so it is reasonable that given the map $\mathfrak{I}$ one can extract from it information about the moduli space of $\mathcal{N}=1$ M-theory compactifications.

\item We give explicit examples of the previous abstract constructions for the case where $n$ is eight and $\mathcal{M}_{8}$ is either a product manifold or an $S^{1}$-bundle over the internal space $\mathcal{M}_{7}$, fully characterizing the torsion classes on the base when the total space is equipped with a torsion-free $Spin(7)$-structure.

\item We have proven that every eight-dimensional, oriented, spin $S^{1}$-bundle admits a topological $Spin(7)$-structure.
\end{itemize}

\noindent
We have found that by imposing $Spin(7)$ holonomy on an eight-dimensional manifold built either as a product manifold or as a spin $S^1$-bundle, one can describe situations in which the $G_2$ torsion classes are respectively $\tau_0$ and $\tau_2$.  We summarize our results in the following table

\begin{table}[ht]
   \begin{center}
      \bigskip
      \begin{tabular}{|c|c|c|c|} \hline \rule[-0.3cm]{0cm}{0.8cm}
       ${\mathcal M}_8$  & $\Omega_4$ & $g_8$ & $\tau({\mathcal M}_7)$ \\ \hline
       ${\mathbb R}^+ \times {\mathcal M}_7$ &  $ dt \wedge\varphi + (\tau_0 t) \,\ast_{7}\varphi$ & $dt \otimes dt  + (\tau_0 t)^2 g_{7}$ & $\tau_0$, $d_7\tau_0=0$ 
    \\ \hline 
         \rule[-0.3cm]{0cm}{0.8cm} 
       $S^{1}\hookrightarrow {\mathcal M}_8\, , \mathcal{M}_{7}\simeq\mathcal{M}_{8}/S^{1}$  & $v^{\flat}\wedge\phi + \ast\left(v^{\flat}\wedge\phi\right)$ & $v^{\flat} \otimes v^{\flat} + \pi^{\ast}g_7$ &   $\tau_2=-d_7 v^{\flat}$ \\ \cline{2-4}
         \rule[-0.3cm]{0cm}{0.8cm}  
         & $\tilde v^{\flat}\wedge\phi + \ast\left(\tilde v^{\flat}\wedge\phi\right)$ & ${\tilde v}^{\flat}\otimes {\tilde v}^{\flat} +\pi^{\ast} g_7$ &   $\tau_2=-d_7 \tilde v^{\flat} $\\ \hline
            \end{tabular}
      \label{summary}
      \caption{Summary of results. In the $S^1$ bundle cases, $v$ is the generator of the $S^1$ action, and $\tilde v=f v$, for some function $f\in C^{\infty} ({\mathcal M}_8)$.} 
   \end{center}
\end{table}

\noindent
To embedd other torsion classes (or in other words to geometrize fluxes in other representations) via a torsion-free structure, one has to either build the eight-dimensional manifold in a different way (like for example as a codimension-one foliation), or consider torsion-free structures in higer-dimensions than eight, which would correspond to decompositions of $E_{7(7)}\times\mathbb{R}^{+}$ respect to subgroups $G$ with representations of dimension higher than eight.


\acknowledgments

We would like to thank Dominic Joyce, Spiro Karigiannis, Ruben Minasian, Simon Salamon, Raffaele Savelli and Marco Zamon for very useful discussions and comments. We are particularly grateful to Dominic Joyce and Marco Zamon for reading preliminary versions of the manuscript and making very important suggestions. This work was supported in part by the ERC Starting Grant 259133 -- ObservableString, and by projects MTM2011-22612.


\appendix


\renewcommand{\leftmark}{\MakeUppercase{Bibliography}}
\phantomsection
\bibliographystyle{JHEP}
\bibliography{References}
\label{biblio}
\end{document}